\definecolor{hblue}{HTML}{206899}		
\definecolor{cgreen}{rgb}{0,0.6,0}		
\definecolor{cpurple}{rgb}{0.58,0,0.82}	
\definecolor{cgray}{rgb}{0.5,0.5,0.5}	
\definecolor{cbg}{rgb}{0.975,0.975,0.975}	
\scriptsize\color{cgray},
\colorlet{punct}{red!60!black}
\definecolor{background}{HTML}{EEEEEE}
\definecolor{delim}{RGB}{20,105,176}
\colorlet{numb}{magenta!60!black}
\lstdefinelanguage{json}{
    basicstyle=\normalfont\ttfamily,
    numbers=left,
    numberstyle=\scriptsize,
    stepnumber=1,
    numbersep=8pt,
    showstringspaces=false,
    breaklines=true,
    backgroundcolor=\color{background},
    literate=
      {:}{{{\color{punct}{:}}}}{1}
      {,}{{{\color{punct}{,}}}}{1}
      {\{}{{{\color{delim}{\{}}}}{1}
      {\}}{{{\color{delim}{\}}}}}{1}
      {[}{{{\color{delim}{[}}}}{1}
      {]}{{{\color{delim}{]}}}}{1},
}
\newcommand{\blockHeight}{591,872}
\newcommand{\graphName}{TIO}
\newcommand{\unknown}{unknown}
\newcommand{\cmark}{\ding{51}}%
\newcommand{\xmark}{\ding{55}}%
\newtheoremstyle{break}
  {\topsep}{\topsep}%
  {\itshape}{}%
  {\bfseries}{}%
  {\newline}{}%
\theoremstyle{plain}
\newtheorem{lemma}{Lemma}
\newtheorem{remark}{Remark}
\theoremstyle{break}
\newtheorem{definition}{Definition}
\newtheorem{theorem}{Theorem}
\newcommand{\ndssColorTex}{\textcolor{black}}
\newcommand{\arXivColorTex}{\textcolor{black}}
\begin{document}
%
\title{A Novel Framework for the Analysis of Unknown Transactions in Bitcoin: Theory, Model, and Experimental Results } 


\author{\IEEEauthorblockN{Maurantonio Caprolu\IEEEauthorrefmark{1},
Matteo Pontecorvi\IEEEauthorrefmark{2},
Matteo Signorini\IEEEauthorrefmark{2}, 
Carlos Segarra\IEEEauthorrefmark{3} and
Roberto Di Pietro\IEEEauthorrefmark{1}}\\
\IEEEauthorblockA{\IEEEauthorrefmark{1}Division of Information and Computing Technology, College of Science and Engineering\\Hamad Bin Khalifa University, Qatar Foundation - Doha, Qatar}
\IEEEauthorblockA{\IEEEauthorrefmark{2}NOKIA Bell Labs - 91620 Nozay, France}
\IEEEauthorblockA{\IEEEauthorrefmark{3}Imperial College, London, UK}
}


\maketitle
\sloppy

\begin{abstract}
Bitcoin (BTC) is probably the most transparent payment network in the world, thanks to the full history of transactions available to the public. Though, Bitcoin is not a fully anonymous environment, rather a pseudonymous one, accounting for a number of attempts to beat its pseudonimity using clustering techniques. There is, however, a recurring assumption in all the cited deanonymization techniques: that each transaction output has an address attached to it.  That assumption is false. An evidence is that, as of block height \blockHeight{}, there are several millions transactions with at least one output for which the Bitcoin Core client cannot infer an address. \\ 
In this paper, we present a novel approach based on sound graph theory for identifying transaction inputs and outputs. Our solution implements two simple yet innovative features:  it does not rely on BTC addresses and explores all the transactions stored in the blockchain. All the other existing solutions fail with respect to one or both of the cited features. 
\arXivColorTex{In detail, we first introduce the concept of Unknown Transaction and provide a new framework to parse the Bitcoin blockchain by taking them into account.}
Then, we introduce a theoretical model to detect, study, and classify---for the first time in the literature---unknown transaction patterns in the user network. Further, in an extensive experimental campaign, we apply our model to the Bitcoin network to uncover hidden transaction patterns within the Bitcoin user network. Results are striking: we discovered more than $30,000$ unknown transaction DAGs, with a few of them exhibiting a complex yet ordered topology and potentially connected to automated payment services. To the best of our knowledge, the proposed framework is the only one that enables a complete study of the unknown transaction patterns, hence enabling further research in the fields---for which we provide some directions.
\end{abstract}


%

\section{Introduction}\label{sec:introduction}

Known as the first successful virtual currency with the potential to disrupt the banking system and provide peer-to-peer payments, Bitcoin has been widely adopted in ransomware campaigns such as \textit{Wannacry}~\cite{wanacry-wiki} and \textit{NotPetya}~\cite{notpetya-wiki}. The main reasons being the relative diffusion of Bitcoin, as well as  privacy. Privacy 
is easily achievable with Bitcoin pseudonyms in the form of randomly generated addresses that can be used to send/receive money without being linked to any real identity. 

Being organized into wallets, Bitcoin’s addresses can be easily and freely self-generated by end-users (neither banks nor trusted third parties are needed)  without any limitation on their number. Indeed, using each address for a single transaction is a strongly advised common practice in the Bitcoin community and has always been its key component in providing privacy, since the resulting transactions’ network has always been assumed too complex to track money flows. 

Such ``privacy through complexity'' approach has been enhanced in the last years by online services such as mixers and tumblers.
Similar to the TOR project~\cite{Jawaheri2018}, aimed at concealing  user's location and network activities from anyone conducting network surveillance or traffic analysis, mixers and tumblers try to conceal users’ addresses that took part in some monetary transactions.
The functioning of such services is quite simple and, similarly to TOR, they require to bounce bitcoins through peers in order to make their tracking hard. Furthermore, the bitcoins that are bounced and returned to the user are not the same that were initially sent, since they come from other sources (i.e. other addresses). 
However, during the last few years, it has been shown that the above ``privacy through complexity'' approach can be attacked by clustering the addresses into groups that are likely to belong to the same entity (a user, a shop, a mixer etc.). In 2015, David Nick described~\cite{Nick2015} some of the most famous heuristics being used still to date for Bitcoin address clustering, such as the shadow, consumer, optimal-change, and multi-input. These heuristics, applied to transactions data previously extracted from the blockchain using parsing algorithms, produce in output the clustered user network.

All such parsing algorithms suffer from the cognitive bias that Bitcoin transactions are just operations that link one address to another~\cite{Maesa2016,Androulaki2013}. However, such address-based linkability is not enforced directly in the Bitcoin protocol: the protocol only verifies that the locking and unlocking scripts do not produce any false statement~\cite{Antonopoulos2017}. 
This cited bias is also reinforced by the fact that most of the locking and unlocking scripts follow a specific pattern based on asymmetric encryption using randomly generated addresses. However, such a pattern is not mandatory. 
Consequently, transactions data parsed relying only on recognizable locking/unlocking scripts risk to be incomplete, and they often are. This incompleteness, in turn, causes a loss of reliability in all modern clustering and de-anonymization techniques. Moreover, intentionally crafted custom transactions could be used to hide illicit money flows, while being completely invisible to modern automatic parsers unable to decode output addresses.
As an evidence, note that up to block with height \blockHeight{}, Bitcoin's blockchain contains more than $22$ million transactions with at least one locking script (or output) not following any well-known locking/unlocking script. We will often refer to these outputs as \emph{\unknown{}} transaction outputs.

\subsection{Contribution}
Our contributions are both theoretical and applied. At glance, we are the first, to the best of our knowledge, to provide a navigation tool within the Bitcoin blockchain that: does not rely on addresses and explores all the transactions stored in the blockchain. In detail, our contributions  can be summarized as follow: 

\begin{itemize}
    \item \arXivColorTex{we introduce the general concept of \emph{\unknown{}} transactions, subsuming the less rigorous definition of non-standard transactions provided by the Bitcoin protocol;}
    \item we extend the definition of user network $\mathcal{U}$ by including \emph{\unknown{}} transactions;
    \item we design a novel (theoretically sound) parsing methodology to study, and likely understand for the first time, \emph{\unknown{}} transaction patterns by using a specific class of graphs called T-DAGs; 
    \item we show that T-DAGs can be efficiently compared via isomorphism, offering a new mechanism for clustering similar transaction patterns.
    \item we test our algorithms over the Bitcoin network, \arXivColorTex{collecting and analysing all \unknown{} transactions in the ledger from Bitcoin origins until block \blockHeight{};} 
    \item \arXivColorTex{we further refine our results by removing trivial non-standard transactions already observed in the literature, revealing } classes of hidden \emph{\unknown{}} transaction patterns never considered before;
    \item to the best of our knowledge, this is the first study of the Bitcoin transaction network which includes \emph{\unknown{}} transactions. 
\end{itemize}

\arXivColorTex{With reference to the Bitcoin context, our contribution can be used to collect and observe the existing \emph{\unknown{}} transaction patterns in the ledger, as well as those that will be generated in the future. These patterns, previously ignored by any transactions analysis work, can be used to complete the user network $\mathcal{U}$ and the transaction network $\mathcal{T}$. Our methodology can be applied to any clustering and de-anonymization technique to improve its effectiveness by leveraging a complete and reliable Bitcoin transactions database.}
Moreover, our novel solution can be extended immediately to any other system where transaction patterns can be modeled using T-DAGs.\\ 

\textbf{Organization:} The rest of the paper is organized as follows. In Section~\ref{sec:background} we introduce preliminary concepts used in the paper. In Section~\ref{sec:related-work} we summarize the state of the art for Bitcoin addresses clustering and graph isomorphism problem. In Section~\ref{sec:gen-wflow} we define our theoretical model capable of extending user network in order to identify those sub-graphs that involve \unknown{} transactions. In Section~\ref{sec:implementation}, we describe the application of our theoretical model to the Bitcoin network, discussing the results achieved during our experimental evaluation. Finally, in Section~\ref{sec:conclusion} we summarize our results, discuss the possible implications, and suggest future research lines.

\section{Background} \label{sec:background}

\subsection{Preliminaries}
Let $G = (V, E)$ be a graph with nodes $V$ and edges $E$.
We will consider both undirected graphs with $E \subseteq \binom{V}{2}$, and directed graphs with $E \subseteq V \times V$. 
An undirected graph $G$ is said to be connected if, for any $u, v \in V$, there is a path from $u$ to $v$. 
In a directed graph, edges are sometimes called arcs and the first vertex of an arc is referred to as the tail and the second one as the head.
Given a directed graph $D = (V, E)$, the underlying graph $G = (V, E')$ is the undirected graph obtained with the same vertex set and an edge set corresponding of each arc in $E$. That is, if $(u, v)$ is an arc in $D$ then both $(u,v)$ and $(v,u)$ are edges in $G$.
A directed graph is weakly connected if the underlying graph is connected.
From now on, we will always refer to directed graphs and use edges and arcs interchangeably.

For an edge $e = \{u,v\} \in E$, we say that $u$ is adjacent to $v$, and vice-versa. If there exists a sequence of edges connecting two vertices $u$ and $v$, we say that these two vertices are reachable. If $u$ is reachable starting from  itself (through a non-empty sequence of edges) 
we say $G$ contains a cycle. If $G$ contains no cycles, $G$ is said to be acyclic. We call directed acyclic graphs DAGs. A tree is an undirected, connected, and acyclic graph. We use $K_{x,y}$ to indicate the complete bipartite graph between two sets of nodes $X$ and $Y$ with $|X|=x$ and $|Y|=y$.
Given a vertex $v \in V$, $v$'s indegree, $\Gamma^-(v)$, is the cardinal of edges (or arcs) that have $v$ as its head. Symmetrically, $v$'s outdegree, $\Gamma^+(v)$, is the cardinal of edges (or arcs) that have $v$ as its tail. We call a vertex with zero indegree a source, and one with zero outdegree a sink.
Given two graphs $G_1 = (V_1,E_1)$ and $G_2=(V_2,E_2)$, an isomorphism is a bijection function $f(.)$ between the sets of vertices, $V_1$ and $V_2$, such that it maintains adjacency. That is, $(u_1, u_2) \in E_1 \Rightarrow (f(u_1), f(u_2)) \in E_2$. A graph labeling is the assignment of labels to the vertices, edges, or both, of a graph. Given a graph $G$, a canonical form is a labeled graph $\bar{G}$, isomorphic to $G$, such that $H$ is isomorphic to $G$ iff $\bar{G} = \bar{H}$---i.e, they have identical canonical forms. 
An equivalence relation $\sim$, is a binary relation on a set $S$  that is: (i) reflexive; (ii) symmetric; and, (iii) transitive. For any elements $a, b$ and $c$ in $S$ it holds: (i) $a \sim a$, (ii) $a \sim b \Rightarrow b \sim a$ and (iii) $a \sim b \wedge b \sim c \Rightarrow a \sim c$. 
An equivalence relation, $\sim$, partitions a set in equivalence classes. 
A total ordering, $\prec$,  is a binary relation on a set $S$ that is: (i) antisymmetric; (ii) transitive; and, (iii)  connex. That is, $\forall a, b, c \in S$: (i) $a \prec b \wedge b \prec a \Rightarrow a = b$; (ii) $a \prec b \wedge b \prec c \Rightarrow a \prec c$; and, (iii) $a \prec b \cup b \prec a$.

\subsection{Bitcoin Transaction Network and User Network}
The Bitcoin network is composed by peers which can be either \emph{full nodes} or \emph{clients}, with full nodes validating transactions and blocks~\cite{Bitcoin_full_node}.
The Bitcoin protocol stores every transaction in a publicly distributed ledger, thus allowing everyone to read the transactions occurred among Bitcoin users. 
To ease the analysis of the money flow contained in the Bitcoin ledger, the Bitcoin data-set can be modeled with two different graphs (Transaction and User Network) as introduced  in~\cite{Fergal2013}.\\*
The \emph{Transaction Network} $\mathcal{T}$ describes the Bitcoin flow between transactions over time \cite{Motamed2019}. Each node represents a transaction, and each directed edge between two nodes n1 and n2 represents a money flow that is an output for n1 and an input for n2. The network is a directed acyclic graph (DAG) since the output of a transaction can never be an input (either directly or indirectly) to the same transaction.\\ 
The \emph{User Network} $\mathcal{U}$ describes the Bitcoin flow between users over  time \cite{DiFrancescoMaesa2016}. Each node represents a Bitcoin user, and each directed edge between two nodes $n_1$ and $n_2$ represents an inputs-outputs pair of a single transaction, where the inputs belongs to $n_1$ and the outputs belongs to $n_2$. More in detail, the \emph{User Network} is a weighted directed hypergraph $\mathcal{U} = (\mathcal{A}, \mathcal{T})$ where $\mathcal{A}$ is the set of all the addresses used in the Bitcoin network, and $\mathcal{T}$ is the set of transactions. Every transaction $t \in \mathcal{T}$  can be modeled as a pair of ordered sets $(X,Y)$ with $X,Y \subseteq \mathcal{A}$ , where addresses included in $X$ are inputs of $t$ and addresses in $Y$ are outputs of $t$.

\begin{table*}[]
\centering
\begin{tabular}{|c|c|c|c|c|c|c|}
\hline
\multirow{2}{*}{\textbf{Solution}} & \multicolumn{2}{c|}{\textbf{Objective}} & \multicolumn{2}{c|}{\textbf{Unknown Transactions}} & \multicolumn{2}{c|}{\textbf{Considered Ledger Portion}} \\ \cline{2-7} 
 & \textbf{Parser} & \textbf{Clustering} & \textbf{Considered} & \textbf{Study of Patters} & \textbf{From Block} & \textbf{To Block} \\ \hline
\textbf{Blocksci~\cite{Kalodner2017}} & \cmark & \cmark & \xmark & \xmark & 0 & custom \\ \hline
\textbf{Blockchain.com} & \cmark & \cmark & \xmark & \xmark & 0 & last mined block \\ \hline
\textbf{Bitcoincore} & \cmark & \xmark & \xmark & \xmark & 0 & last mined block \\ \hline
\textbf{~\cite{Fergal2013}} & \xmark & \cmark & \xmark & \xmark & 0 & 130367 \\ \hline
\textbf{~\cite{Androulaki2013}} & \cmark & \cmark & \xmark & \xmark & 0 & 140000 \\ \hline
\textbf{~\cite{Meiklejohn2013}} & \cmark & \cmark & \xmark & \xmark & 0 & 231207 \\ \hline
\textbf{~\cite{kondor2014rich}} & \xmark & \xmark & \xmark & \xmark & 0 & 235000 \\ \hline 
\textbf{\cite{spagnuolo2014bitiodine}} & \cmark & \cmark & \xmark & \xmark & 0 & 267350 \\ \hline
\textbf{\cite{Maesa2016}} & \xmark & \cmark & \xmark & \xmark & 0 & 389799 \\ \hline
\textbf{\cite{Bartoletti2017}} & \xmark & \xmark & \cmark & \xmark & 0 & 453200 \\ \hline 
\textbf{\cite{Ermilov2017}} & \xmark & \cmark & \xmark & \xmark & 0 & 456520 \\ \hline
\textbf{Our Solution} & \cmark & \cmark & \cmark & \cmark & 0 & 591872 \\ \hline
\end{tabular}
\caption{\ndssColorTex{Comparison of our solution with popular parsers and state-of-the-art clustering techniques.}} 
\label{tab:comparison}
\end{table*} 

\section{Related Work} \label{sec:related-work}
\arXivColorTex{To the best of our knowledge, very few works have investigated \unknown{} transactions in the Bitcoin ledger. In~\cite{Bartoletti2017},} the authors analyzed 1,887,708 transactions containing the \texttt{OP\_RETURN} instruction. They found that $15\%$ of them are empty transactions, generated by different activities on the Bitcoin network, such as stress tests or DoS attacks. The remaining transactions are not empty but, similarly to the previous ones, they are not used for transferring funds. In fact, they have a different, specific goal: to store data in the Bitcoin ledger. The \texttt{OP\_RETURN} transactions do not have a valid recipient, since they are not used to transfer funds. Therefore, they cannot be redeemed. For this reason, these transactions are not of particular interest for clustering and de-anonymizing techniques of Bitcoin users, i.e., they are present neither in $\mathcal{T}$ nor in $\mathcal{U}$.
\arXivColorTex{A more in-depth analysis of non-standard transactions in the Bitcoin network has been proposed in~\cite{bistarelli_2019}. The authors explored the ledger collecting and classifying both standard and non-standard transactions to understand why users sometimes do not adhere to the protocol. To achieve this goal, they mainly focus on analyzing non-standard transactions, classifying them into nine different typologies.\\Although these studies analyzed some non-standard transactions, their purpose is only to analyze the semantic, considering every transaction as a stand-alone object. Consequently, such transactions are still ignored in the construction of both the user network $\mathcal{U}$ and the transaction network $\mathcal{T}$, leading to incomplete and possibly unreliable data structures. To solve this problem, we first collected all the unknown transactions in the Bitcoin ledger, regardless of their semantic.}
We then focused on those that have a valid locking script as they have an impact on the de-anonymization and clustering techniques, neglected by all previous works in the field (see a summary in Section~\ref{subsec:related-deanonymization}). By using the proposed methodology, our framework is able to correctly parse \unknown{} transactions, identify their patterns, and complete the user network $\mathcal{U}$ and the transaction network $\mathcal{T}$ with additional data never considered before.


\subsection{Bitcoin Addresses Clustering} \label{subsec:related-deanonymization}
Recently, security properties of blockchain-based protocols, with particular attention on the Bitcoin network, received increasing attention~\cite{Gervais:2016:SPP:2976749.2978341, Kiffer:2018:BMA:3243734.3243814, Kwon:2017:SAD:3133956.3134019, Gervais:2015:TDB:2810103.2813655, Dagher:2015:PPP:2810103.2813674}.
Among the different topics covered, several methods have been proposed to cluster bitcoin addresses~\cite{Harrigan2016}. These contributions build $\mathcal{U}$ by applying different heuristics to $\mathcal{T}$. The most used heuristic, the \texttt{Common Input}, is based on the observation that all the inputs of a multi-input transaction belong to the same user~\cite{Fergal2013}. This heuristic has been used to cluster Bitcoin addresses with the aim of deanonymizing users by using off-chain data~\cite{Lischke2016} and study the Bitcoin network to uncover different properties~\cite{Maesa2016, Ober2013, kondor2014rich}. Later, a more advanced heuristic called \texttt{One-time Change} was introduced~\cite{Meiklejohn2013, Ermilov2017, Androulaki2013, spagnuolo2014bitiodine}, based on the detection of the change among the output addresses.
\par All of these approaches built the transaction network $\mathcal{T}$ and the user network $\mathcal{U}$ by parsing the Bitcoin blockchain using the reference implementation (i.e. Bitcoin Core) or other blockchain explorer software. As explained in Section~\ref{subsec:gwflow-1}, current parsing algorithms, due to some incomplete assumptions, do not consider \unknown{} transactions, providing incomplete and in some cases incorrect data to clustering and deanonymization algorithms.\\
Table~\ref{tab:comparison} shows a comparison between the p0roposed solution and: (i) the Bitcoin reference implementation, i.e., Bitcoincore; (ii) blockchain explorer services, represented by the most popular one, blockchain.com; (iii) other platforms to analyze the Bitcoin blockchain, e.g., Blocksci; and, (iv) the most representative contributions in the literature that already used the user network $\mathcal{U}$ or the transaction network $\mathcal{T}$.

\subsection{Graph Isomorphism} \label{subsec:related-graph}
The graph isomorphism (GI) problem consists in determining whether given two graphs $\mathcal{G}$ and $\mathcal{V}$ there exists a bijection between both sets of vertices that preserves adjacencies. 
The lowest time bound for general GI stood since 1983 and until very recently at $\exp{\mathcal{O}\left(\sqrt{n \log n}\right)}$~\cite{Babai1983}.
However, polynomial time algorithms have long been known for different families of graphs. Among others, for graphs with bounded valence, the first polynomial time test dates from 1980~\cite{Luks1980}. This work was used to solve the problem for graphs with bounded eigenvalue multiplicity~\cite{Babai1982} in $\mathcal{O}\left(n^{4m+c}\right)$. For circulant graphs, a $\mathcal{O}(n^2)$ testing algorithm is known~\cite{Muzychuk2004}. The tree isomorphism problem was proven~\cite{Hopcroft1974} to be linear in the number of vertices, that is, two trees $T_1$ and $T_2$ can be tested for isomorphism in $\mathcal{O}(n)$ comparing them in a bottom-up fashion. A result from 2016~\cite{Babai2016} improved the 1983 bound for the GI problem to quasipolynomial time. The author proved an upper bound of $\exp{\left(\log n \right)^{\mathcal{O}(1)}}$ for testing wether two arbitrary graphs are isomoprphic or not.

A possible test of graph isomorphism is performed through canonical forms and canonical labelings. A canonical form of a graph is a representative of a class of graphs closed under isomorphisms, and with a linearly ordered vertex set~\cite{BabaiLuks1983}. That is, two graphs are isomorphic if and only if they yield the same canonical form. A \emph{canonical labeling} is a string derived from a canonical form of a graph obtained through a defined mapping. The problem of obtaining canonical labelings for general graphs remains quasi-exponential in the number of vertices. A combinatorial method that runs in $\exp{n^{2/3}+o(1)}$ has long been known~\cite{BabaiLuks1983}. In the same paper the authors give a procedure to compute the canonical form of graphs of bounded valence (or degree) in polynomial time. Due to its relevant applications and the urge for practical solutions, implementations of graph canonization packaged in different software has proven to be successful. The McKay canonical labeling algorithm~\cite{McKay1978} packaged in the \texttt{nauty} software is one of the most celebrated solutions. It was the most popular solution from the early 80s until 2004, when a modern re-implementation~\cite{Darga2004}, \texttt{saucy}, was introduced. Exploiting symmetry and sparsity in the search space pruning the latter managed to outperform the former by several orders of magnitude. The refinements later introduced in \texttt{bliss}~\cite{Junttila2007} concluded in the release of \texttt{Traces}. The definition of the latter piece of software together with a comparison of all the software packages presented is discussed by its authors in~\cite{McKay2013}.

When restricting the problem to rooted or un-rooted trees, linear is the best possible sequential runtime~\cite{Hopcroft1974}. However, tree canonization has been proven to be in alternating logarithmic time~\cite{Buss1997}.
The labeling therein defined is depth-first and the approach presented in this paper is breadth-first. In spite of that, the ordering relation we define is based on the same principles. However, a more recent approach~\cite{Yun2015} for tree canonical labeling uses a breadth-first approach. The authors therein do provide a less verbose labeling (at most $3n$) than ours, but this is due to the fact that they deal with specialized trees: ones where each vertex has an in-degree of at most 1. 

\section{Unknown Transactions Recognition} \label{sec:gen-wflow}
In this section, we introduce the general concept of \emph{\unknown{}} transactions, their classification, and the theoretical model to identify patterns within \emph{\unknown{}} transactions.

\subsection{Unknown Transactions and Working Framework} \label{subsec:gwflow-1}
\arXivColorTex{The Bitcoin protocol provides its community with standard templates that must be used to create the locking and unlocking scripts that make up a transaction. The use of such templates is then enforced by miners using two functions, isStandardTx() and isStandard(), which check the compliance of each transaction's inputs and outputs, respectively. In fact, a transaction is considered standard, and therefore accepted by the network, only if both functions return TRUE. If even one of them returns FALSE, the transaction is considered non-standard and discarded. This mechanism should prevent any use of Bitcoin transactions other than the ones conceived by the protocol, to avoid the spread of malicious transactions. However, even non-standard transactions can be included in the blockchain, thanks to miners who relax these controls~\cite{bistarelli_2019}.\\ Similar to the concept of non-standard transaction, we define \emph{\unknown{}} transactions as follow:}


\begin{definition}[Unknown Transaction]\label{def-unkn_tx}
We call a transaction (TX) \emph{\unknown{}} if it contains an input or an output with a \texttt{Null} value address, i.e. not correctly identified by the Bitcoin Core client.
\end{definition}

\noindent
\arXivColorTex{This definition embraces a set of Bitcoin transactions, of which non-standards are currently a subset, regardless of what the protocol considers standard or non-standard. The concept of unknown transactions allows us to protect our framework from future variations of the Bitcoin protocol and guarantees compatibility with other systems.\\}
Among \unknown{} TXs, we are only interested in transactions whose \emph{output} contains a \texttt{Null} address. This is because the address of an input is unequivocally determined by that of the output that it is spending. Hence, resolving the address of the output is equivalent to resolving the address of the associated input as well. Additionally, an input only exists to fund the outputs contained in its corresponding transaction. Thus, there is no such thing as an address for an input as it does not belong to someone. However, for simplicity, we agree to assign to an input the same address held by the output that it is spending.
To unambiguously process the list of transaction hashes, we need to uniquely identify inputs and outputs. This is done using the \emph{Unique Transaction Input-Output Identifier}, introduced below in Definition \ref{def:utxio}. But, before introducing such an Fidentifier, we motivate why it is required by our approach.

For elaborated blockchain analysis, it is a good idea to initially parse all the data from a running miner and, once the data is organized in a more accessible manner, apply further and more complex post-processing. However, we have discovered several imprecisions in the blockchain's parsing process:

\begin{itemize}
    \item[(i)] \textbf{Excess of abstraction} Blockchain parsers such as BlockSCI~\cite{Kalodner2017} introduce a completely new level of abstraction over the one already specified in the reference implementation~\cite{Bitcoin_develop}. 
    Defining new wrappers, lots of different classes, and incomplete references can make a parser difficult to use and debug. Additionally, we discovered that, in the particular case of BlockSCI, there are errors in their parsing methodology; 
    \item[(ii)] \textbf{Excess of identifiers} Bitcoin's blockchain is an environment based on uniqueness. Every item must be uniquely identified and hashing algorithms already provide a way to do so. However, some parsers~\cite{Kalodner2017} insist on giving an alternative enumeration for transactions and addresses. This makes databases harder to navigate and makes it non-intuitive to mimic the client's behavior or debug the processed data;

    \item[(iii)] \textbf{Using Public Keys as Identifiers} As introduced in Section~\ref{sec:introduction}, to the best of our knowledge, 
    existing works~\cite{Maesa2016,Androulaki2013}, clustering Bitcoin Addresses to find real end-users, assume that each transaction output must have an address assigned to it. 
    This is false. In fact, up to block with height $481823$, Bitcoin's blockchain contains $3255688$ \unknown{} transactions.
\end{itemize}

Given the above problems, our proposed framework aims to provide a reference for storing Bitcoin's data in a database; minimizing the amount of abstraction involved, reusing whenever possible the identifiers provided by the reference implementation, and keeping the structure simple and clear. 

To be consistent with both (i) and (ii), we only introduce the critical functionalities not covered by the Bitcoin core~\footnote{The core client cannot find which transaction input is spending a given unspent output.}.
This way, parsers using our framework can be easily compared against each other, the Bitcoin Core Client, or even web explorers. 
Further layers of abstraction depending on the application should be detached from the parsing phase to avoid situations where complex post-processing is discredited by incorrect data parsing. This means that all non-relevant information for blockchain navigation is not included in the framework.

\textbf{The Framework}
Our framework uses only minimal abstraction and provides a robust, reliable, and fast way to navigate through the Bitcoin's blockchain. It is also easily portable: all applications that query or do some sort of post-processing with Bitcoin's data can use it. To fulfill these conditions and to preserve minimality, only the necessary attributes are included.
All other features included in the reference implementation, that provide key information about each transaction, but do not improve the exploration of the blockchain, are discarded. In fact, they can be easily obtained by using the identifiers provided by our framework, together with any Bitcoin client.

We present a database layout that only contains two types of entities: \texttt{block} and \texttt{tx}.
\begin{itemize}
    \item[(i)] \textbf{\texttt{block}:} represents a block in the blockchain. It is uniquely identified by two parameters: \texttt{hash} and \texttt{height}. Both the parameters can be used to retrieve a \texttt{block} element from the database without ambiguity. Each \texttt{block} element has an additional parameter, \texttt{tx}. \texttt{tx} is an array of hashes, each one referencing a transaction
 included in the block (see next item for a description on the \texttt{tx} entity). Each element in the array can be uniquely identified, and accessed, by the index of their position within the array. This way, the $m$-th transaction of the $n$-th block can be identified without uncertainty;
    
    \item[(ii)] \textbf{\texttt{tx}:} represents a confirmed transaction (TX) in the blockchain. Since the Bitcoin's ledger contains different cases of transactions with the same hash\footnote{Blocks 91812 and 91842 contain a transaction with hash: ``d5d27987d2a3dfc724e359870c6644b40e497bdc0589a033220fe15429d88599''.}, this attribute cannot be used as a unique identifier. We realized that the Bitcoin Core client still uses the hash attribute to uniquely identify a transaction, causing a loss of information: searching for a particular transaction hash, the Core client returns only the last occurrence of that hash in the ledger. As a result, any transaction stored in the blockchain with a hash equal to a more recent transaction will never be returned by the client. For this reason, we uniquely identify a transaction using its attribute pair \texttt{<blockhash, hash>} which represents the hash of the block they belong to and its hash, respectively. The \texttt{vin} attribute is an array of pairs \texttt{<txID, txID[vout]>}. It represents the set of inputs contained in the transaction. Each input can be uniquely identified by their index within the transaction input array~\footnote{Since inputs do not have a global unique identifier, and since the reference client implementation does not define input entities, we have chosen not to do so either.}. The \texttt{txID} attribute from the pair is the \texttt{hash} attribute of the TX that contains the output that the input is spending and \texttt{txID[vout]} is the index of the spent output within the TX that contains it. Symmetrically, the \texttt{vout} attribute is an array of pairs \texttt{<txID, txID[vin]>} where, if the output is spent by some input in the future, the TX hash, and the index within the transaction where the output is spent, are included. If the TX is unspent, both attributes are set to \texttt{null}. Each output can be uniquely identified through the hash of the transaction they are contained in and their index in the output array (\texttt{tx.vout}).
\end{itemize}
The above introduced  structure leads to Definition~\ref{def:utxio}, which we will use often in the rest of the paper.
\begin{definition}[\graphName{}] \label{def:utxio}
    A \textbf{Unique Transaction Input-Output} (\graphName{}) is an identifier that can uniquely identify all the inputs and outputs contained in confirmed transactions within the blockchain. We denote the set of all inputs and outputs as $<\text{\graphName{}} >$.
\end{definition}
A first contribution of our framework is the possibility to \textit{travel to the future} in the blockchain. This allows us to easily identify the paths followed by bitcoins through the blockchain history. Definition~\ref{def:spending-funded} formalizes some new terminology related to our traveling mechanism.
\begin{definition}[Traveling the Blockchain] \label{def:spending-funded}
    Given a \graphName{}, we define the current terms:
    \begin{itemize}
        \item[(i)] If the \graphName{} corresponds to an \textbf{Input}:
            \subitem (a) The \textbf{spending output} of \graphName{} refers to the output that this input is using;
            \subitem (b) The \textbf{funded outputs} of \graphName{} refers to the outputs that this input is providing bitcoins to. By Bitcoin design, we assume that the funded outputs for an input are all the outputs contained in the same transaction than the input.
        \item[(ii)] If the \graphName{} corresponds to an \textbf{Output}:
            \subitem (a) The \textbf{spending inputs} of \graphName{} are all the inputs that funded this output. By Bitcoin design, we assume that all the spending inputs for an output are all the inputs contained in the same transaction than the output.
            \subitem (b) If the output is spent, the \textbf{funded input} is the input that is spending the output. Note that, this input will appear in a more recent transaction than the one containing the output.
    \end{itemize}
\end{definition}
A brief summary of the data structure used in our framework is provided in Table~\ref{table:standard}.
\begin{table}[!ht]
    \centering
    \caption{Summary table of our data structure.}
    \label{table:standard}
    \begin{adjustbox}{max width=\linewidth}
    \begin{tabular}{|l|l|l|} \cline{1-1} \cline{3-3}
        \cellcolor{gray!20}\textbf{\texttt{block}} & & \cellcolor{gray!20}\textbf{\texttt{tx}}\\[3pt] \cline{1-1} \cline{3-3}
        $\ast$ \texttt{hash} & & $\ast$ \texttt{hash}\\[3pt]
        $\ast$ \texttt{height} & & $+$ \texttt{blockhash} \\[3pt]
        $+$ \texttt{tx} $\coloneqq \left[ +\text{ \texttt{hash} } \right]_{<n> (\ast)}$ & & $+$ \texttt{vin} $\coloneqq \left[ \begin{array}{c} +\text{ \texttt{txID} } \\ +\text{ \texttt{txID[vout]} } \end{array} \right]_{<n> (\ast)}$ \\[3pt] \cline{1-1}
        \multicolumn{2}{c|}{} & $+$ \texttt{vout} $\coloneqq \left[ \begin{array}{c} +\text{ \texttt{txID} } \\ +\text{ \texttt{txID[vin]} $(\#)$ } \end{array} \right]_{<n> (\ast)}$ \\[3pt] \cline{3-3}
        \multicolumn{3}{c}{} \\[10pt] \hline
        \multicolumn{3}{|l|}{\textbf{Legend:}} \\[3pt]
        \multicolumn{3}{|l|}{\textbf{$\ast \coloneqq$} Attribute is a unique identifier for the entity.} \\[3pt]
        \multicolumn{3}{|l|}{\textbf{$+ \coloneqq$} Attribute of a given entity.} \\[3pt]
        \multicolumn{3}{|m{9.5cm}|}{\textbf{$\# \coloneqq$} New attributes that do not appear in the reference implementation.} \\[8pt]
        \multicolumn{3}{|m{9.5cm}|}{\textbf{$\left[ \cdots \right]_{<n> (\ast)} \coloneqq$} Array of elements with the attributes specified between brackets. These elements can be uniquely identified within their container by their position in the array (indexed by an integer $n$).} \\[3pt] \hline
    \end{tabular}
    \end{adjustbox}
\end{table}

\subsubsection*{Locking Script}

In addition to its \graphName{}, we are also interested in the locking script for an output. By \emph{locking script}, we refer to the script that has to be redeemed in order to spend the output. In the Bitcoin Core reference implementation, it is referred as \texttt{scriptPubKey}.\\
\arXivColorTex{In the later stages of our methodology, we will use TIOs to build the Unknown TX T-DAGs. Instead, the locking scripts will be used to filter our results by removing T-DAGs generated by transactions with purposes other than the transfer of crypto coins.}

\subsection{Unknown TX T-DAG Construction} \label{subsec:gwflow-3}

In this section, we lay the blockchain data in a graph using the framework defined in~\ref{subsec:gwflow-1}, introduce the concept of \textbf{Unknown TX graphs} and study the derived \textbf{T-DAGs}. The study of these directed graphs will enable us to describe, tailor and identify \unknown{} transaction patterns on the Blockchain.
\begin{definition}[\graphName{} graph] \label{def:utxio-graph}
    Let $<\text{TX}>$ be the set of confirmed transactions in the blockchain. Let $G_{\text{\graphName{}}} = (V,E)$ be a directed unweighted graph such that:
    \begin{itemize}
        \item[(i)] $V = \hspace{3pt} <\text{\graphName{}}>$
        \item[(ii)] $E = \left\lbrace \bigcup\limits_{t \in <TX>} \left\lbrace E \left( K_{|I_t|, |O_t|} \right) \hspace{-8pt} \bigcup\limits_{\substack{o \in O_t \\ o \not\in \text{UTXO}}} \hspace{-8pt} \left( o, \text{gFI}(o) \right) \right\rbrace \right\rbrace$
    \end{itemize}
    where \text{gFI} returns the funded input of a given output, given a transaction $t$, $I_t$ and $O_t$ denote $t$'s set of inputs and outputs respectively, and UTXO is the unspent transaction output store.
\end{definition}
\begin{lemma}\label{lemma:is-a-dag}
    The \graphName{} graph, $G_{\text{\graphName{}}}$, is a directed acyclic graph (DAG).
\end{lemma}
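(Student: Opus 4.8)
The plan is to exhibit a topological ordering of the vertices of $G_{\text{\graphName{}}}$, from which acyclicity follows immediately. First I would note that every vertex $v \in V = \;<\text{\graphName{}}>$ is either an input or an output belonging to exactly one confirmed transaction, which I denote $t(v)$. By construction the arc set splits into exactly two families: the \emph{intra-transaction} arcs coming from each $K_{|I_t|,|O_t|}$, which by Definition~\ref{def:spending-funded}(i)(b) are oriented from each input of $t$ to each output of $t$ (so both endpoints share the same $t(v)$); and the \emph{inter-transaction} arcs $(o,\mathrm{gFI}(o))$, which run from a spent output $o$ to the input $\mathrm{gFI}(o)$ that redeems it.

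Next I would fix an integer-valued total order $\tau$ on transactions. I can take $\tau$ to be the natural blockchain order (by block height, breaking ties by position within the block), which by Bitcoin's validation rules respects the spending relation; equivalently, since the transaction network $\mathcal{T}$ is a DAG (as recalled in Section~\ref{sec:background}), I may take $\tau$ to be any topological order of $\mathcal{T}$. The single property I need is that whenever an output $o$ is spent by input $\mathrm{gFI}(o)$, the redeeming input lives in a strictly later transaction, i.e. $\tau(t(o)) < \tau(t(\mathrm{gFI}(o)))$; this is exactly Definition~\ref{def:spending-funded}(ii)(b) together with the fact that no output can be spent inside its own transaction.

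I would then define the potential
$$\phi(v) \;=\; 2\,\tau\big(t(v)\big) + \big[\, v \text{ is an output}\,\big]$$
and check that every arc strictly increases $\phi$. Along an intra-transaction arc the term $\tau(t(v))$ is unchanged and the indicator rises from $0$ to $1$, so $\phi$ grows by exactly $1$; along an inter-transaction arc the $2\tau$ term grows by at least $2$ (strict separation) while the indicator drops by at most $1$, a net increase of at least $1>0$. Since $\phi$ is strictly monotone along every arc, no directed closed walk can return to its starting vertex, so $G_{\text{\graphName{}}}$ contains no cycle; being directed by construction, it is therefore a DAG.

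The step I expect to be the main obstacle is justifying the strict temporal separation for the inter-transaction arcs — that the funded input always sits in a strictly later transaction than the output it spends, even for two transactions mined in the same block. This is where I lean on the acyclicity of $\mathcal{T}$ (no output feeds an input of its own transaction, directly or indirectly) and on the blockchain's canonical ordering. Everything else — the input-to-output orientation within each $K_{|I_t|,|O_t|}$ and the monotonicity bookkeeping — is routine once that ordering is in place.
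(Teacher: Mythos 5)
Your proof is correct and takes essentially the same route as the paper's: the paper argues informally that every edge points from a temporally earlier event to a more recent one, and your potential $\phi$ is a rigorous formalization of exactly that timestamp ordering, with the input/output indicator neatly covering the intra-transaction arcs where the transaction index alone would give only non-strict monotonicity. No gap to report.
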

\begin{proof}
    Nodes in the graph represent validated inputs or outputs in the blockchain. This means that, when they were broadcast to the network, each miner checked them. For an input or an output to be validated, they must always point to an event that happened in the past. Each edge then goes from an event that happened further in the past to a more recent one. This timestamp characteristic is sufficient to ensure that there are no cycles.
\end{proof}
\begin{definition}[$\alpha$-nodes] \label{def:alpha-node}
    An $\alpha$-node is a set of vertices $S$ from $G_{\text{\graphName{}}}$ such that, exists a transaction $T$ such that its set of inputs $I_T = S$ and
    \begin{itemize}
        \item[(i)] $S$ is a \texttt{coinbase}\footnote{A coinbase transaction is a special transaction in the Bitcoin protocol creating new coins as mining rewards \cite{Antonopoulos2017}.} transaction, or 
        \item[(ii)] $\exists \, s \in S$ such that $s$ is spending an output with a \texttt{BTC Address}.
    \end{itemize}
\end{definition}
\begin{remark} \label{remark:inputs-are-alpha}
    For each transaction $T$, its set of inputs $I_T$ fulfills:
    \begin{itemize}
        \item[(i)] $I_T$ is an $\alpha$-node, or
        \item[(ii)] $\forall s \in I_T$, $s$ is spending an output with a \texttt{None} address.
    \end{itemize}
\end{remark}
The introduction of $\alpha$-nodes and the previous remark identifies a natural contracted graph of $G_{\text{\graphName{}}}$.
\begin{definition}[Contracted \graphName{} graph]
    The \textbf{Contracted \graphName{} graph}, $G_{\text{\graphName{}}}^*$, is the graph resulting of applying the following two transformations to $G_{\text{\graphName{}}}$:
    \begin{itemize}
        \item[(i)] Identify (contract) all vertices~\cite{Pemmaraju2003} in an $\alpha$-node. Repeat for each different $\alpha$-node contained in $G_{\text{\graphName{}}}$.
        \item[(ii)] For each transaction \textbf{T} fulfilling the second condition in Remark~\ref{remark:inputs-are-alpha},
            \subitem (a) for each spending output \textbf{o} of each input in \textbf{$I_T$}, add an edge from \textbf{o} to each output in \textbf{T}.
            \subitem (b) remove every vertex in \textbf{$I_T$}, as well as its inbound and outbound edges.
    \end{itemize}
\end{definition}
\begin{remark} \label{remark:still-a-dag}
    The transformations applied to $G_{\text{\graphName{}}}$ do not introduce cycles and, as a consequence, $G_{\text{\graphName{}}}^{\ast}$ is also a DAG.
\end{remark}
To define the subgraphs in the TIO graph relevant for our research, we still have to introduce some more concepts.
\begin{definition}[Termination application] \label{def:termination-application}
    Let $f$ be a function,
    $ f: \, \, <\text{\graphName{}}> \longrightarrow \lbrace 0,1 \rbrace $
    defined as follows:
    $$f(x) = \left\lbrace \begin{array}{cl} 0 & \text{ if $x$'s address is \texttt{None}} \\ 1 & \text{otherwise} \end{array} \right. $$
\end{definition}
\noindent
    Given a weakly connected single-source DAG, we call all nodes that are not the source nor sinks \emph{internal nodes}.
\begin{definition}[Unknown TX graph] \label{def:unknown-graph}
    An \textbf{Unknown TX graph} is a single-source, weakly connected, maximal induced subgraph of $G_{\text{\graphName{}}}^*$ such that:
    \begin{itemize}
        \item[(i)] The source $s$ is an $\alpha$-node.
        \item[(ii)] Each sink $t$ fulfills $f(t) = 1$.
        \item[(iii)] Each internal node $v$ fulfills $f(v) = 0$.
    \end{itemize}
\end{definition}
Unknown TX graphs will be our object of study for the rest of the paper. From their construction, we observe the following points.
\begin{definition}[T-DAG] \label{def:t-dag}
    A \textbf{T-DAG} is a single-source directed unlabeled acyclic weakly connected graph.
\end{definition}
\begin{remark}
    Lemma~\ref{lemma:is-a-dag} and Remark~\ref{remark:still-a-dag} prove that an Unknown TX graph is a T-DAG.
\end{remark}
From now on, we will refer to Unknown TX graphs as \textbf{Unknown TX T-DAGs}\footnote{Unlike trees, a vertex in a T-DAG may have more than one parent.}.
\begin{remark}
    If we fix a source $s$, then there exists only one Unknown TX T-DAG with $s$ as its root. We can then denote as $G(s)$ the Unknown TX DAG generated by a given root $s$.
\end{remark}
\begin{definition}[Set of Unknown TX T-DAGs] \label{def:set-dags}
    We define the set of Unknown TX T-DAGs, $\mathcal{D}$, as follows:
\begin{equation*}
    \begin{split}
        \mathcal{D} = & \left\lbrace G(s) : s \text{ is an } \alpha\text{-node and $G(s)$ has at least two vertices} \right\rbrace
    \end{split}
\end{equation*}
\end{definition}
Algorithm~\ref{alg:dag-generation} presents a procedure to generate the Unknown TX T-DAG given an $\alpha$-node $s$.
\begin{algorithm}
  \caption{Unknown TX T-DAG Generation from its root. \label{alg:dag-generation}}
  \begin{algorithmic}[1]
    \Procedure{T-DAG Generation}{$s$}
    \State $G \gets \text{Graph}()$
    \State $S \gets \text{Stack}()$
        \State $G$.addNode($s$)
        \ForAll{$tx\_out$ \textbf{in} getFundOutput($s$)}
            \State $G$.addNode($tx\_out$)
            \State $G$.addEdge($s$, $tx\_out$)
            \State $S$.push($tx\_out$)
        \EndFor
        \While{\textbf{!} $S$.isEmpty()}
            \State $tx\_out \gets S.$pop()
            \If{\textbf{!} $f (tx\_out)$}
                \State $tx\_in \gets$ getFundInput($tx\_out$)
                \ForAll{$new\_out$ \textbf{in} getFundOutput($tx\_in$)}
                    \State $G$.addNode($new\_out$)
                    \State $G$.addEdge($tx\_out$, $new\_out$)
                    \State $S$.push($new\_out$)
                \EndFor
            \EndIf
        \EndWhile
    \EndProcedure
  \end{algorithmic}
\end{algorithm}
An example of an Unknown TX DAG is presented in Figure~\ref{fig:unk-tx-DAG}. Note that, we attach the associated address for each node.

\begin{figure}[!ht]
    \centering
    \includegraphics[width=0.95\columnwidth]{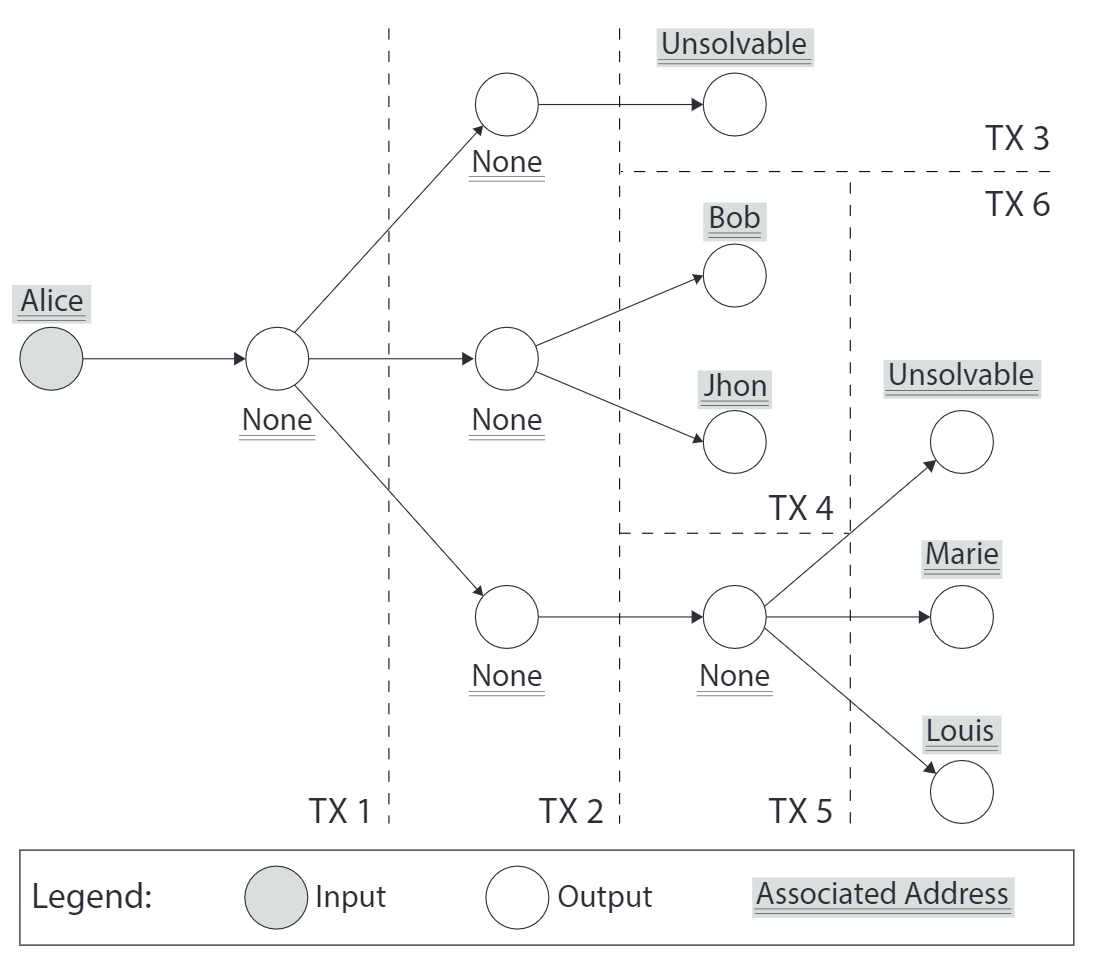}
    \caption{Representation of an Unknown TX T-DAG.\label{fig:unk-tx-DAG}}
\end{figure}

\subsubsection*{T-DAG Compression}
Before any further processing of the Unknown TX T-DAGs, we are interested in pruning the inner nodes that were published to the blockchain without a locking mechanism. These are \emph{trivial} locking scripts that can be solved by anyone at any moment. 
It is easy to verify that a pruned maximal unknown TX T-DAG is still a T-DAG whose leaves fulfill the termination condition. The pruning process involves removing the trivial output and linking the funded outputs with the spending ones. Figure~\ref{fig:compression} illustrates the compression process~\footnote{Note that, since inputs are initially removed from the compressed graph, we actually link funded outputs with the spending outputs associated to the removed inputs. Nonetheless, in Figure~\ref{fig:compression}, we have included the input for better comprehension.}.

\begin{figure}[!ht]
    \centering
    \includegraphics[width=\columnwidth]{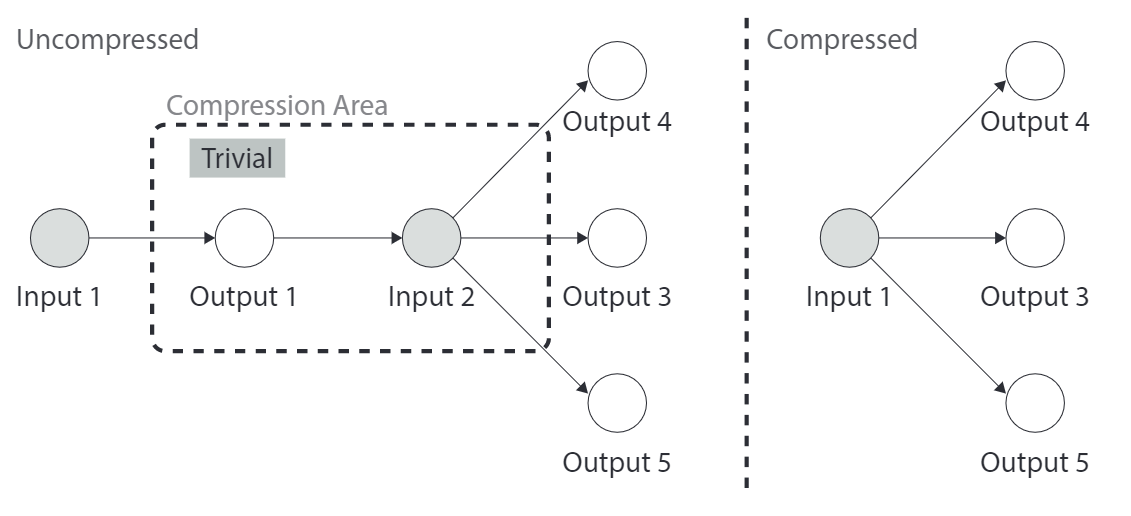}
    \caption{Illustration of a compression process.\label{fig:compression}}
\end{figure}

From now on, when we refer to an Unknown TX DAG we will actually be referring to its compressed version.

\subsection{Chain Abstractions and Post-Processing} \label{subsec:gwflow-4}
After compressing an Unknown TX DAG, we need to store it. To do so, we will use a canonical representation of our graph. A \textit{canonical labeling} is a string derived from a graph such that two different graphs are isomorphic if and only if they yield the same string. A \textit{canonical representation} (i.e. the canonical labeling associated to a set of isomorphic graphs) is then the representative of a graph isomorphism class. In this section we present a total ordering for T-DAG isomorphism classes and a canonical labeling for T-DAGs. 

\subsubsection*{A total ordering for T-DAG isomorphism classes}
Let us first introduce the notation that we will use in this section. Let $T$ be a T-DAG (see Definition~\ref{def:t-dag}) with its root denoted by $t$. Given a T-DAG $T = (V, E)$, we say that $|T| \coloneqq |V|$. 
Given a vertex $v \in V$, $\Gamma^+(v)$ is its outdegree. 
Lastly, given a T-DAG $T$ the children of the root $t$ are $t_1, \dots, t_{\Gamma^+(t)}$. The set $(T_1, \dots , T_{\Gamma^+(t)})$ denotes the maximal collection of sub DAGs induced on $T$ having $t_1, \dots, t_{\Gamma^+(t)}$ as roots.
\begin{definition}[$\prec$ - relation] \label{def:relation}
    Given two T-DAGs, $S$ rooted at $s$ and $T$ rooted at $t$, we say $S \prec T$ if
    \begin{itemize}
        \item[(i)] $|S| < |T|$, or
        \item[(ii)] $|S| = |T| \wedge \Gamma^+(s) < \Gamma^+(t)$, or
        \item[(iii)] $|S| = |T| \wedge \Gamma^+(s) = \Gamma^+(t) = k \, \wedge$ for the first index $i \leq k$ for the ordered sets $(S_1, \dots, S_k)$ and $(T_1, \dots, T_k)$ (where $S_1 \preceq \dots \preceq S_k$ and $T_1 \preceq \dots \preceq T_k$) where $S_i$ differs from $T_i$, it holds $S_i \prec T_i$.
    \end{itemize}
\end{definition}
\begin{definition}[$\equiv$ - equality] \label{def:equality}
    Given two T-DAGs $T$ and $S$, we say $T \equiv S$ if neither $T \prec S$ nor $T \succ S$ hold.
\end{definition}
Let $\cong$ be the isomorphism operator. We derive the following lemma: 
\begin{lemma} \label{lemma:equiv-iso}
    Given two T-DAGs $T$ and $S$ with $|T| = |S| = n$, then $ T \equiv S \Leftrightarrow T \cong S $.
\end{lemma}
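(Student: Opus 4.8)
The plan is to prove the biconditional by strong induction on $n = |T| = |S|$, following the recursive shape of Definition~\ref{def:relation}. The base case $n = 1$ is immediate: a single-source T-DAG on one vertex is just its root, so $T \cong S$ holds trivially; and since $|T| = |S|$, $\Gamma^+(t) = \Gamma^+(s) = 0$, with no child sub-DAGs to compare, neither $T \prec S$ nor $S \prec T$ can fire, so $T \equiv S$ as well. For the inductive step I assume the statement for all T-DAGs of size strictly less than $n$, together with the fact that $\equiv$ is an equivalence relation (so that the sorts $S_1 \preceq \dots \preceq S_k$ and $T_1 \preceq \dots \preceq T_k$ are well defined); if that fact is not established separately, I would fold its proof into the same induction.

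For the direction $T \cong S \Rightarrow T \equiv S$, let $\phi$ be an isomorphism. Because both graphs are single-source and $\phi$ preserves indegree, $\phi$ must send the unique source $t$ to the unique source $s$. Hence $\phi$ restricts to a bijection between the children of $t$ and those of $s$, forcing $\Gamma^+(t) = \Gamma^+(s) = k$ and mapping, for each $i$, the sub-DAG rooted at $t_i$ isomorphically onto the sub-DAG rooted at some $s_{\sigma(i)}$. By the inductive hypothesis each such pair is $\equiv$-equivalent, and since $\equiv$-equivalent elements land in the same positions after sorting, the ordered lists $(T_1,\dots,T_k)$ and $(S_1,\dots,S_k)$ agree entry-by-entry up to $\equiv$. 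Thus none of clauses (i)--(iii) of Definition~\ref{def:relation} can produce a strict comparison in either direction, giving $T \equiv S$.

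For the converse $T \equiv S \Rightarrow T \cong S$, the failure of clause (i) gives $|S| = |T| = n$, the failure of clause (ii) gives $\Gamma^+(s) = \Gamma^+(t) = k$, and the failure of clause (iii) means that after sorting there is no first index where the children differ strictly, so $S_i \equiv T_i$ for every $i$. The inductive hypothesis then yields component isomorphisms $\phi_i : T_i \to S_i$, and I would define the candidate $\phi$ by $t \mapsto s$ together with the action of $\phi_i$ on each sub-DAG, finally checking that adjacency is preserved.

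The crux of the argument — and the point where the T-DAG case departs from the classical tree canonization argument (AHU / Hopcroft--Tarjan~\cite{Hopcroft1974}) — is gluing the $\phi_i$ into a single well-defined bijection. For trees the sub-trees are vertex-disjoint, so the union of the $\phi_i$ is automatically a function; for a T-DAG a vertex may be reachable from several children (the footnote ``a vertex in a T-DAG may have more than one parent'' flags precisely this), so two maps $\phi_i, \phi_j$ could disagree on a shared descendant, and their images on the overlap must coincide for $\phi$ to be well defined and injective. I expect this reconciliation to be the only substantive difficulty. The plan to resolve it is to show the recursion respects vertex identity: rather than comparing abstract sub-DAGs, I would carry the actual vertex sets through the induction and argue that the overlap $T_i \cap T_j$ is itself a sub-DAG occupying an identically determined position on both sides, so the $\phi_i$ necessarily agree there. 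Equivalently, I would route the argument through the canonical labeling developed later in Section~\ref{subsec:gwflow-4}: establish that $T \equiv S$ holds iff $T$ and $S$ admit the same canonical labeling, and that this labeling is a complete isomorphism invariant, so that reconstructing $\phi$ from matching labels automatically forces consistency on shared vertices. The degree- and size-bookkeeping in the remaining clauses is routine.
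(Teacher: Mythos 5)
Your overall strategy---induction on $n$, with the two implications handled separately through the recursive clauses of Definition~\ref{def:relation}---is the same as the paper's, and your $T \cong S \Rightarrow T \equiv S$ direction matches the paper's argument essentially step for step. You are also right to single out, as the crux of the converse, the problem of gluing the component isomorphisms $\phi_i$ on descendants shared between several sub-DAGs: the paper's own proof silently elides exactly this point by writing $\sigma = \sigma_1 \circ \dots \circ \sigma_k \circ (t \rightarrow s)$ as though the sub-DAGs $T_1, \dots, T_k$ were vertex-disjoint, which for a T-DAG they need not be.

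However, your proposed repair---arguing that the overlap $T_i \cap T_j$ ``occupies an identically determined position on both sides'' so that the $\phi_i$ must agree there---cannot succeed, because the relation $\equiv$ does not see the sharing structure at all. Concretely: let $T$ have root $t$ with children $a,b,c,d$ and further vertices $x,y$ with edges $a \to x$, $b \to x$, $c \to y$, $d \to y$; let $S$ have root $s$ with children $a',b',c',d'$ and further vertices $x',y'$ with edges $a' \to x'$, $b' \to x'$, $c' \to x'$, $d' \to y'$. Both are single-source, weakly connected DAGs on $7$ vertices with root outdegree $4$, and every maximal sub-DAG rooted at a child of the root is a two-vertex path, so all of them are pairwise $\equiv$; hence clauses (i)--(iii) fail in both directions and $T \equiv S$ by Definition~\ref{def:equality}. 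Yet $T \not\cong S$, since the indegree multisets of the sinks are $\{2,2\}$ versus $\{3,1\}$. So the implication $T \equiv S \Rightarrow T \cong S$ is false as stated for T-DAGs with shared descendants (it is correct when the sub-DAGs are disjoint, i.e.\ for trees, which is the classical AHU setting you cite), and no gluing argument can rescue it: the comparison itself would have to be strengthened to record which vertices the sibling sub-DAGs share, as the breadth-first labeling of Definition~\ref{def:t-dag-labeling} does by reusing identifiers. Your fallback of routing through the canonical labeling therefore changes the statement being proved rather than closing the gap in this one.
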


\begin{proof}
    We prove each implication separately,

        $[\Rightarrow ]$  Let $S_n$ be the symmetric group acting on the vertices of $T$, $V(T)$. Given $\sigma \in S_n$, we denote the action of $\sigma$ on $v \in V(T)$ by $\sigma(v)$. We can naturally extend the definition to sets of vertices, $S \subseteq V, \sigma (S) = \lbrace \sigma(v) : v \in S \rbrace$, and to the T-DAG itself $\sigma(T) \coloneqq (\sigma(V), E')$, where $E' = \lbrace (\sigma(u), \sigma(v)) : (u,v) \in E(T) \rbrace$. \begin{remark} \label{remark:permutation} Let $S$ and $T$ be two T-DAGs, $|T| = |S| = n$. Then, $$S \cong T \Leftrightarrow \exists \, \sigma \in S_{n} \text{ such that } \sigma(S) = T$$ \end{remark} We now prove this direction of the lemma by induction on the size of the T-DAG, $|T|$.
        
        \textbf{If $|T| = 1$:} then both $S$ and $T$ are T-DAGs formed by a single vertex, hence they are the same graph and therefore isomorphic taking the identity permutation.
        
        \textbf{If $|T| = n$:} to prove the inductive step we assume that for any pair of T-DAGs with size $< n$, then $T \equiv S \Rightarrow T \cong S$. If now $|T| = n$, $T \equiv S \Rightarrow |T| = |S| = n$, $\Gamma^+(t) = \Gamma^+(s) = k$, and $(T_1, \dots, T_k) \equiv (S_1, \dots, S_k)$ pairwise, where $S_1 \preceq \dots \preceq S_k$ and $T_1 \preceq \dots \preceq T_k$. That is, $\forall i \in {1, \dots k} \left\lbrace \begin{array}{c} T_i \equiv S_i \\ |T_i| = |S_i| < n \end{array} \right. \overset{\tiny{Ind. H}}{\Longrightarrow} T_i \cong S_i \overset{\tiny{R.~\ref{remark:permutation}}}{\Longrightarrow} \exists \sigma_i \in S_{|T_i|} \text{ such that } \sigma_i (T_i) = S_i$. We now consider the following permutation: $\sigma = \sigma_1 \circ \dots \circ \sigma_k \circ (t \rightarrow s)$, the composition of all the permutations that match each subtree and the map from a root to the other. $\sigma$ fulfils that $\sigma(T) = S \overset{\tiny{R.~\ref{remark:permutation}}}{\Longrightarrow} T \cong S$
        
        $[\Leftarrow ]$ We argue again by induction on the size of the T-DAG. The base case is the same as before so we do not repeat it. For the induction step, we have:
        
        \textbf{If $|T| = n$:} $T \cong S \Rightarrow |T| = |S| = n \wedge \Gamma^+(t) = \Gamma^+(s) = k$. Additionally, ordering the subtrees $(T_1, \dots, T_k)$, $(S_1, \dots, S_k)$ such that $T_1 \preceq \dots \preceq T_k$ and $S_1 \preceq \dots \preceq S_k$ necessarily $T_i \cong S_i \forall i \in \lbrace 1, \dots, n \rbrace$ with $|T_i| = |S_i| < n \overset{Ind. H}{\Longrightarrow} T_i \equiv S_i \Rightarrow (T_1, \dots, T_k) \equiv (S_1, \dots, S_k) \Rightarrow T \equiv S$. \\
\end{proof}

\begin{remark} \label{remark:is-total-ordering}
    The operators $(\prec, \succ, \equiv)$ induce a total ordering on T-DAGs isomorphism classes.
\end{remark}
\begin{proof}
    Given $T$ and $S$ two T-DAGs,
    \begin{itemize}
        \item[(i)] \textbf{Antisymmetry:} $S \preceq T \wedge S \succeq T \Leftrightarrow \neg (S \succ T) \wedge \neg (S \prec T) \Leftrightarrow S \equiv T \Leftrightarrow S \cong T$
        \item[(ii)] \textbf{Transitivity:} clearly holds by definition.
        \item[(iii)] \textbf{Connex Property:} $S \preceq T \vee S \succeq T \Leftrightarrow \neg (A \succ B) \vee \neg (A \prec B) \Leftrightarrow \neg (A \succ B \wedge A \prec B) \Leftrightarrow \neg 0 = 1$
    \end{itemize}
\end{proof}
\subsubsection*{Canonical labeling for T-DAGs}
We now introduce a canonical labeling for T-DAGs.  We provide unique representatives for T-DAG isomorphism classes and their string representation. 

\begin{definition}[$\Delta$-operator]
    The \textbf{indegree operator} ($\Delta$) is a total ordering on equivalence classes of the $\equiv$-relation. Let $T$ be a T-DAG such that $T_1 \preceq \dots \preceq T_{\Gamma^+(t)}$. That is, it takes a set of representatives of the $\equiv$-relation and orders it. Formally,
    \begin{equation*}
        \begin{split}
            \Delta : & \lbrace T_1, \dots, T_{\Gamma^+(t)} \rbrace / \equiv \longrightarrow \lbrace T_1, \dots, T_{\Gamma^+(t)} \rbrace / \equiv \\
            & \lbrace \overline{T_i}, \dots, \overline{T_{i+k}} \rbrace \longmapsto \Delta ( \lbrace \bar{T_i}, \dots, \overline{T_{i+k}} \rbrace) \coloneqq (\lbrace \overline{T_i}, \dots, \overline{T_{i+k}} \rbrace, \leq_{\ast})
        \end{split}
    \end{equation*}
    where $k \in \mathbb{N}$ and $(\lbrace \overline{T_i}, \dots, \overline{T_{i+k}} \rbrace, \leq_{\ast})$ is the totally ordered set according to the following relation:
    \begin{equation*}
        \begin{split}
            \overline{T_i} \leq_{*} \overline{T_j}  \Leftrightarrow & \left( \lbrace |\Gamma^-(t_{i_1})|, \dots, |\Gamma^-(t_{i_{\Gamma^+(t_i)}})| \rbrace, \leq \right)  \leq \\ & \left( \lbrace |\Gamma^-(t_{j_1})|, \dots, |\Gamma^-(t_{j_{\Gamma^+(t_j)}})| \rbrace, \leq \right)
        \end{split}
    \end{equation*}
    That is, $\overline{T_i} \leq_{\ast} \overline{T_j}$ iff the non-decreasing indegree sequence of $t_i$'s children is pairwise smaller than that of $t_j$.
\end{definition}
\noindent
Naturally, applying the operator to the whole set (i.e. $\Delta(T)$) means applying it element-wise in the quotient set, reordering only elements that were considered $\equiv$-equal.
\begin{definition}[T-DAG isomorphism classes representative] \label{def:t-dag-canonical}
    Given a T-DAG, $T$, we reorder it so that $T_1 \preceq \dots \preceq T_{\Gamma^+(t)}$. We denote the reordered T-DAG with $T^{\ast}$. The representative of $T$'s isomorphism class $\bar{T}$ is defined as $\bar{T} \coloneqq \Delta(T^{\ast})$.
\end{definition}
\begin{lemma} \label{lemma:repr-well-defined}
    Given $T$ and $S$ T-DAGs, $T \cong S \Rightarrow \bar{T} = \bar{S}$. Thus $\bar{T}$ is well defined.
\end{lemma}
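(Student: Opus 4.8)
The plan is to argue by induction on the common size $n = |T| = |S|$, exploiting Lemma~\ref{lemma:equiv-iso} to move freely between the isomorphism relation $\cong$ and the order-theoretic equality $\equiv$. The base case $n=1$ is immediate, since a single-vertex T-DAG has only one representative. For the inductive step I would fix an isomorphism $\sigma$ with $\sigma(T) = S$ (Remark~\ref{remark:permutation}). Since the root is the unique source of a T-DAG, $\sigma$ must send the root $t$ of $T$ to the root $s$ of $S$, and hence restricts to a bijection between the children $t_1, \dots, t_k$ of $t$ and the children of $s$, where $k = \Gamma^+(t) = \Gamma^+(s)$. Because $\sigma$ preserves adjacency in both directions it preserves reachability, so it carries the induced sub-DAG rooted at each $t_i$ isomorphically onto the induced sub-DAG rooted at $\sigma(t_i)$; thus the multiset of isomorphism classes of the children's sub-DAGs $\lbrace T_1, \dots, T_k \rbrace$ agrees with that of $\lbrace S_1, \dots, S_k \rbrace$. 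Moreover $\sigma$ preserves indegrees, $|\Gamma^-(v)| = |\Gamma^-(\sigma(v))|$ for every vertex $v$ (these being computed in the full graph, which matters since sub-DAGs may overlap), so the non-decreasing indegree sequence of the children of $t_i$ equals that of the children of $\sigma(t_i)$.

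Next I would unfold the definition of the representative from Definition~\ref{def:t-dag-canonical}. Building $\bar{T}$ amounts to two nested sorts of the children sub-DAGs: first by the total order $\preceq$ on isomorphism classes (well defined by Remark~\ref{remark:is-total-ordering}), and then, within each block of $\equiv$-equal sub-DAGs, by the order $\leq_\ast$ that $\Delta$ imposes from the non-decreasing indegree sequences of their roots' children, the recursive representatives $\bar{T_i}$ being emitted in the resulting order. By the first paragraph every sorting key used here---the $\preceq$-class of each sub-DAG and the indegree sequence of its root's children---is an isomorphism invariant shared by $T$ and $S$. By the induction hypothesis, applied to each pair of corresponding sub-DAGs (which have size strictly less than $n$), we also obtain $\bar{T_i} = \bar{S_{\pi(i)}}$ for the matching permutation $\pi$. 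Hence $T$ and $S$ give rise to the same $\preceq$-sorted sequence of blocks and, within each block, to the same multiset of recursive representatives.

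The delicate point, and the one I expect to require the most care, is the residual ambiguity in the construction. The sorting by $\preceq$ orders sub-DAGs only up to $\equiv$, and $\Delta$ breaks ties only as far as the indegree sequences allow; two sub-DAGs $T_i, T_j$ may be simultaneously $\equiv$-equal and $\leq_\ast$-equal, in which case neither $T^{\ast}$ nor $\Delta(T^{\ast})$ prescribes a unique order between them. I would resolve this by observing that any such fully tied pair satisfies $T_i \equiv T_j$, hence $T_i \cong T_j$ by Lemma~\ref{lemma:equiv-iso}, so the induction hypothesis yields $\bar{T_i} = \bar{T_j}$. Consequently the recursive representatives within each tied block are identical strings, and permuting them leaves the concatenated labeling unchanged. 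This shows at once that $\bar{T}$ does not depend on the arbitrary choices made while forming $T^{\ast}$ (so $\bar{\cdot}$ is genuinely well defined) and that the identical invariants extracted from $T$ and $S$ force $\bar{T} = \bar{S}$, completing the induction.
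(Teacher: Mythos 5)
Your proof is correct, but it takes a genuinely different route from the paper's. You induct on the size $n=|T|=|S|$ and work top-down from the root: you fix an explicit isomorphism $\sigma$ (via Remark~\ref{remark:permutation}), note that it must map source to source and hence children to children, push it onto the induced sub-DAGs $T_1,\dots,T_k$, observe that every sorting key used in Definition~\ref{def:t-dag-canonical} (the $\preceq$-class and the indegree sequences feeding $\Delta$) is a $\sigma$-invariant, and close the induction on the strictly smaller sub-DAGs. The paper instead inducts on the \emph{height} $h$, comparing the layers $V(T)_h$ and $V(S)_h$ bottom-up and showing that the reordered-with-$\preceq$-then-$\Delta$ collections of maximal sub-DAGs rooted at each layer coincide, which at the top layer gives $\bar{T}=\bar{S}$; this mirrors the bottom-up algorithm of Figure~\ref{fig:obtain-class-repr}. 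Each approach has something to recommend it: the paper's tracks the actual computation of the representative, while yours isolates more cleanly why the construction is invariant, and in particular your final paragraph---showing that sub-DAGs tied under both $\equiv$ and $\leq_\ast$ are isomorphic by Lemma~\ref{lemma:equiv-iso} and hence have identical representatives by the induction hypothesis, so the residual ordering ambiguity is harmless---makes explicit the well-definedness half of the lemma, which the paper asserts but does not separately argue. No gap; the argument stands.
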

\begin{proof}
    Given a vertex $v \in V(T)$, the \textbf{height} of $v$ is the number of edges of the longest path between $v$ and one of its leafs. Let $V(T)_h \subset V(T),$ be the set of vertices with height equal to $h$. We prove that, given an $h$, the set of maximal induced T-DAGs rooted at $V(T)_h$ and $V(S)_h$, reordered with $\preceq$ and then with $\Delta$, are the same. In particular, when $h$ equals the height of the T-DAG (i.e. the height of its root), this yields $\bar{T} = \bar{S}$. We proceed by induction on the height $h$.

    \textbf{If the height is 0:} $T \cong S \Rightarrow | \lbrace v \in V(T) : \Gamma^+(v) = 0 \rbrace | = | \lbrace v \in V(S) : \Gamma^+(v) = 0 \rbrace | = k$. In fact both T-DAGs have the same number of leafs and therefore $\lbrace t_1, \dots, t_k \rbrace = \lbrace s_1, \dots, s_k \rbrace$.

    \textbf{If the height is $h$:} we assume that, for heights $ \leq h$, the set of T-DAGs reordered with $\preceq$ and then with the $\Delta$ operator are the same. $T \cong S \Rightarrow | V(T)_{h+1} | = | \lbrace v \in V(T) : \text{height}(v) = h + 1 \rbrace | = | \lbrace v \in V(S) : \text{height}(v) = h + 1 \rbrace | = | V(S)_{h+1} | = k$. We now order $V(T)_{h+1}$ and $V(S)_{h+1}$ in non-decreasing outdegree order and we apply the $\Delta$ operator to $\lbrace T_1, \dots, T_k \rbrace$ and $\lbrace S_1, \dots, S_k \rbrace$, the T-DAGs with roots in $V(T)_{h+1}$ and $V(S)_{h+1}$. We will refer to the before-mentioned roots as $\lbrace t_1, \dots, t_k \rbrace$ and $\lbrace s_1, \dots, s_k \rbrace$, and to the $j-$th sibling of the $i-$th root as $t_i^j$, where $j \in \lbrace 1, \dots, \Gamma^+(t_i) \rbrace$ (with $T_i^j$ being the T-DAGs rooted at this nodes). $\lbrace T_1, \dots, T_k \rbrace$ and $\lbrace S_1, \dots, S_k \rbrace$ ordered in this manner satisfy $T_{1} \preceq \dots \preceq T_{k} $ and $S_{1} \preceq \dots \preceq S_{k}$. Furthermore, for each T-DAG $T_i$, $i \in \lbrace 1, \dots, k \rbrace$, with root $t_i$ at layer with height $h$, we have that
    \begin{equation*}
        \begin{split}
            \left.
            \begin{array}{l}
                \Gamma^+(t_i) = \Gamma^+(s_i) \\
                \forall j \in \lbrace 1, \dots, \Gamma^+(t_i) \rbrace \left\lbrace \begin{array}{c} \Gamma^-(t_i^j) = \Gamma^-(s_i^j) \\ \text{Ind. H} \Rightarrow  T_i^j = S_i^j  \end{array} \right.
            \end{array}
            \right\rbrace \Rightarrow T_i = S_i
        \end{split}
    \end{equation*}
    All vertices with height $h+1$ have as children the roots of T-DAGs with heights $\leq h$. 
    
    If now we make $h+1$ equal $T$ and $S$'s height, we have $V(T)_{h+1} = t$ and $V(S)_{h+1} = s$. Therefore, the set of maximal induced T-DAGS are $\lbrace T \rbrace$ and $\lbrace S \rbrace$ respectively. We have proven that, reordering with $\preceq$ and $\Delta$, both sets are equal. Hence, $\bar{T} = \bar{S}$.
\end{proof}
\begin{remark} \label{remark:t-cong-bart}
    It holds $T \cong \bar{T}$.
\end{remark}
\begin{proof}
    From Definition~\ref{def:t-dag-canonical} we observe that, in order to obtain $\bar{T}$ from $T$, we reorder the subtrees non-decreasingly and we apply the $\Delta$ operator. Let then $\sigma$ be a permutation such that $\sigma(T)$ generates $T_1 \preceq \dots \preceq T_{\Gamma^+(t)}$. We then consider $\mu$ as the permutation resulting of doing $\sigma$ and $\Delta$ one after the other in this order. From Remark~\ref{remark:permutation} it follows that $\mu (T) = \bar{T} \overset{R.~\ref{remark:permutation}}{\Longrightarrow} T \cong \bar{T}$.
\end{proof}

\begin{definition}[T-DAG labeling] \label{def:t-dag-labeling}
    Given a T-DAG $T$ we identify its vertices traversing $T$ breadth-first with a FIFO queue and, starting from the root, for each new vertex (not identified that we dequeue) we assign it the current vertex count value, increment the count by one and queue its set of children. 
    The \textbf{labeling} lbl$(T)$, associated to $T$, is the string result from traversing the identified $T$ breadth-first with a FIFO queue and, starting from the root, for each vertex (not processed that we dequeue) we append each of its children identifier to the labeling and queue each of its children. We separate children of the same parent with a comma ',', sets of siblings with a colon ':', and we denote the end of the label with a semi-colon ';'~\footnote{We are aware that these separators depend heavily on the labeling implementation.}. We will refer to the identifier (the label) of a vertex $v$ obtained through this procedure as id$(v)$.
\end{definition}
\begin{definition}[T-DAG canonical labeling] \label{def:t-dag-canonical-labeling}
    Given a T-DAG $T$, the \textbf{canonical labeling} of $T$, c$(T)$ is the \textbf{labeling} of its isomorphism class representative, $\bar{T}$. That is, c$(T) \coloneqq \text{lbl}(\bar{T})$.
\end{definition}
In a nutshell, the canonical labeling is obtained with a total ordering for T-DAGs, an indegree-based operator and an additional labeling 
Sufficiency is proven in Lemma~\ref{lemma:canonical-label-well-defined}.

Before proving the sufficiency of the three operations, we introduce additional concepts regarding labelings.
\begin{definition}[Labeling clause]
    Given a labeling lbl$(T)$ of a T-DAG $T$, a \textbf{clause} in the labeling is a set of identifiers contained within either two colons, a colon and a semi-colon, or the beginning of the labeling and a colon.
\end{definition}
\noindent
Note that, given a labeling lbl$(T)$, we can index the clauses by order of appearance starting from $0$. Thus, we can think of lbl$(T)$ as an ordered array of clauses: lbl$(T) = [c_0, \dots, c_n]$, where $n = |V(T)|$. 
Thus, an upper-bound to get the length of each clause, by preprocessing the label, is $\mathcal{O}(m)$, where $m = |E(T)|$.

Given a T-DAG labeling lbl$(T)$, we can obtain an array \texttt{out\_deg} that in the $i$-th position contains the outdegree of $v \in V(T)$ such that id$(v) = i$. Algorithm \ref{alg:outdeg-from-labeling} presents a pseudo-code to do so.
The proofs of correctness (Lemma \ref{lemma:outdegree-sequence}) and complexity (Lemma \ref{lemma:outdegree-sequence-complexity}) are omitted for space reasons.


\begin{algorithm}
  \caption{Outdegree sequence from a label. \label{alg:outdeg-from-labeling}}
  \begin{algorithmic}[1]
    \Procedure{Outdegree Parsing}{lbl$(T)$}
        \State $m \gets \max (\text{lbl}(T))$ \label{line:m}
        \State $out\_deg \gets \text{zeros-array}(m+1)$ \label{line:test}
        \State $processed \gets \text{zeros-array}(m+1)$
        \State $Q \gets \text{FIFO\_Queue}()$
        \State $n\_id \gets 0$
        \For{$i: 1$ \textbf{to} $\#$ clauses in lbl($T$)} \label{line:for}
            \While{\textbf{!} $Q.\text{empty}()$} \label{line:while}
                \State $n\_id \gets Q.\text{dequeue}()$
                \If{\textbf{!} $processed[n\_id]$}
                    \State \textbf{break}
                \EndIf
            \EndWhile
            \If{\textbf{!} $processed[n\_id]$}
                \State $processed[n\_id] \gets \text{True}$ \label{line:process}
                \State $out\_deg[n\_id] \gets \text{len}(c_i)$ \label{line:assign}
                \ForAll{$id$ \textbf{in} $c_i$}
                    \State $Q$.queue($id$) \label{line:push}
                \EndFor
            \EndIf
        \EndFor
        \State \Return $out\_deg$
    \EndProcedure
  \end{algorithmic}
\end{algorithm}
\begin{lemma} \label{lemma:outdegree-sequence}

    Algorithm \ref{alg:outdeg-from-labeling} is correct.
\end{lemma}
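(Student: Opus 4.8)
The plan is to prove correctness by exhibiting Algorithm~\ref{alg:outdeg-from-labeling} as a faithful \emph{replay} of the breadth-first traversal that produced the string in Definition~\ref{def:t-dag-labeling}, and then reading off each outdegree as a clause length. First I would fix notation for the generating traversal. By Definition~\ref{def:t-dag-labeling} the label is emitted by a BFS over the already-identified T-DAG that handles each vertex exactly once, namely the first time it is dequeued while still unprocessed; since a T-DAG is single-source, every vertex is reachable from the root, so the traversal processes all $|V(T)|$ vertices. Let $v_{(1)}, v_{(2)}, \dots$ be these vertices in processing order (so $v_{(1)}$ is the root, with $\mathrm{id}(v_{(1)}) = 0$), and observe that the $i$-th clause $c_i$ of $\mathrm{lbl}(T)$ is exactly the ordered list of identifiers of $v_{(i)}$'s children. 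Because $T$ is a simple directed graph, each child is listed once, so $|c_i| = \Gamma^+(v_{(i)})$. This reduces the lemma to showing that the algorithm assigns $c_i$ to $\mathrm{id}(v_{(i)})$ for every $i$; the fact that the boundaries and hence the lengths $\mathrm{len}(c_i)$ are recovered correctly follows from the unambiguous separator structure of the labeling, which I will take for granted here.

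The heart of the argument is an invariant proved by induction on the loop index $i$. Namely, \emph{after} the $i$-th iteration of the \textbf{for} loop, (a) the algorithm's queue $Q$ holds the same sequence of identifiers, with the same multiplicities and in the same order, as the generating BFS queue holds immediately after its $i$-th processing event, and (b) $\mathit{processed}$ is true precisely on $\{\mathrm{id}(v_{(1)}), \dots, \mathrm{id}(v_{(i)})\}$. The base case $i = 1$ I would check directly: the \textbf{while} loop is skipped because $Q$ is empty, so the initialized value $n\_id = 0$ selects the root, whence $out\_deg[0] \gets \mathrm{len}(c_1) = \Gamma^+(v_{(1)})$ and the identifiers in $c_1$ (the root's children) are enqueued, matching exactly the generating queue after the root is processed.

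For the inductive step I would argue that, since $Q$ and the generating queue agree by hypothesis and the already-processed sets agree, the dequeue-and-skip in the \textbf{while} loop pops the same already-processed identifiers and halts on the same next unprocessed identifier, which is $\mathrm{id}(v_{(i+1)})$; the algorithm then records $out\_deg[\mathrm{id}(v_{(i+1)})] \gets \mathrm{len}(c_{i+1}) = \Gamma^+(v_{(i+1)})$ and enqueues the identifiers of $c_{i+1}$ in the same order the generator does, restoring both parts of the invariant. Concluding, once the invariant holds up to $i = |V(T)|$, every vertex has been processed once and $out\_deg[\mathrm{id}(v)] = \Gamma^+(v)$ for all $v$; sinks contribute empty clauses of length $0$, consistent with the zero-initialization of the array, so no entry is left stale.

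I expect the genuinely delicate point to be the DAG-specific behaviour rather than the arithmetic $|c_i| = \Gamma^+(v_{(i)})$. Because a vertex with several parents is enqueued once for each processed parent, the queue carries repeated identifiers, and the correctness of the replay hinges on the skip-if-processed rule discarding in the algorithm exactly the same repeated occurrences that the generating traversal discards. The hard part will therefore be verifying that this synchronization of the two queues is preserved across the induction step even as multi-parent vertices are re-enqueued and later skipped; everything else follows routinely from the invariant.
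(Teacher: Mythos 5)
The paper never actually gives a proof of Lemma~\ref{lemma:outdegree-sequence}: it explicitly states that the proofs of correctness and complexity of Algorithm~\ref{alg:outdeg-from-labeling} are ``omitted for space reasons.'' So there is no paper argument to compare yours against; your proposal has to stand on its own, and in my reading it does. The simulation/loop-invariant argument --- after the $i$-th iteration of the \textbf{for} loop the algorithm's queue and its $processed$ array coincide with the state of the generating breadth-first traversal of Definition~\ref{def:t-dag-labeling} immediately after its $i$-th processing event, hence $\mathrm{len}(c_i)=\Gamma^+(v_{(i)})$ is written into exactly the slot $\mathrm{id}(v_{(i)})$ --- is the natural proof, and you correctly isolate the one genuinely delicate point: a multi-parent vertex is enqueued once per processed parent by both the generator and the replay, and the skip-if-processed rule discards the same repeated occurrences in both, which is what keeps the two queues synchronized. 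Your reduction of the claim to ``clause $c_i$ lists the children of $v_{(i)}$, each exactly once'' is also right, using that the \graphName{} graph has no parallel edges and that single-sourcedness of a T-DAG guarantees every vertex is reachable from the root and hence processed.

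Two small points are worth a sentence each to make the argument airtight. First, the base case depends on $n\_id$ being initialized to $0$ and the \textbf{while} loop being vacuous on an empty queue, and on the root being the vertex with identifier $0$; you note this, but it deserves explicit mention since a stale $n\_id$ is the only way the algorithm could misfire. Second, you should observe that in iterations $2,\dots,|V(T)|$ the \textbf{while} loop can never drain $Q$ without finding an unprocessed identifier: under your invariant the generating queue at the corresponding moment still contains the next vertex to be processed, so the algorithm's queue does too, and the guarded \textbf{if} after the loop never silently skips a clause. With those remarks added, the proof is complete, and it usefully fills a gap the paper leaves open.
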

\begin{lemma} \label{lemma:outdegree-sequence-complexity}
    The Algorithm described in Algorithm~\ref{alg:outdeg-from-labeling} is linear in the number of edges of the labeled T-DAG $T$. Thus, if $m = |E(T)|$, the complexity is $\mathcal{O} (m)$.
\end{lemma}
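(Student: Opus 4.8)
The plan is to charge every elementary operation performed by Algorithm~\ref{alg:outdeg-from-labeling} either to a vertex or to an edge of $T$, and then to use weak connectivity to fold the vertex count into the edge count. Throughout I write $n \coloneqq |V(T)|$ and, as in the statement, $m \coloneqq |E(T)|$. The first step is to record the relationship between the label and these quantities: each clause $c_i$ is the (possibly empty) list of children of the $i$-th processed vertex, so the number of clauses is $\mathcal{O}(n)$, whereas the total number of identifiers summed over all clauses is $\sum_{v \in V(T)} \Gamma^+(v) = m$. I would also observe that the value computed in line~\ref{line:m}, $\max(\mathrm{lbl}(T))$, is just the largest identifier $n-1$ (identifiers being assigned $0,\dots,n-1$), so the two zero-arrays allocated from line~\ref{line:test} onward have length $\mathcal{O}(n)$ and cost $\mathcal{O}(n)$ to initialize; computing that maximum by scanning the label costs $\mathcal{O}(m)$, within budget.

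Next I would bound the main \texttt{For} loop (line~\ref{line:for}). The loop overhead, the per-clause retrieval of $\mathrm{len}(c_i)$ (made $\mathcal{O}(1)$ by the $\mathcal{O}(m)$ preprocessing already noted before the lemma), and the constant-time updates in lines~\ref{line:process} and~\ref{line:assign} together contribute $\mathcal{O}(n)$ over all iterations. The two genuinely variable parts are the inner \texttt{While} loop (line~\ref{line:while}) and the \texttt{ForAll} loop containing line~\ref{line:push}. For the latter, each execution of line~\ref{line:push} enqueues exactly one identifier of the current clause, so the number of enqueue operations over the entire run equals the total number of identifiers, namely $m$.

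The crux is the \texttt{While} loop, whose per-iteration length is \emph{not} bounded by a constant: a single clause may force many successive dequeues before an unprocessed vertex surfaces, and a worst-case-per-step reading of the pseudocode would give a misleading $\mathcal{O}(n \cdot m)$ bound. The key step is therefore an amortized argument. Since every dequeue in line~\ref{line:while} permanently removes an element from $Q$, and since the total number of elements ever placed into $Q$ equals the number of enqueues, which I have just bounded by $m$, the total number of dequeue operations summed over all iterations of the \texttt{For} loop is at most $m$. I expect this bookkeeping — making precise that each queued element is dequeued at most once and is enqueued exactly once per incoming edge — to be the main obstacle, as it is the only place where the naive reading fails.

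Finally I would combine the pieces: scanning and preprocessing $\mathcal{O}(m)$, array initialization and loop overhead $\mathcal{O}(n)$, enqueues $\mathcal{O}(m)$, and dequeues $\mathcal{O}(m)$, for a total of $\mathcal{O}(n + m)$. To close, I would invoke that a T-DAG is weakly connected (Definitions~\ref{def:t-dag} and~\ref{def:unknown-graph}), so its underlying undirected graph is connected and hence $m \ge n - 1$; therefore $n = \mathcal{O}(m)$ and the total simplifies to $\mathcal{O}(m)$, as claimed.
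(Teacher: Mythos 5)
Your proof is correct. The paper itself omits the proof of this lemma (``The proofs of correctness (Lemma~\ref{lemma:outdegree-sequence}) and complexity (Lemma~\ref{lemma:outdegree-sequence-complexity}) are omitted for space reasons''), so there is no reference argument to compare against; your write-up supplies exactly the missing details one would expect. You correctly identify the only nontrivial point --- that the inner \texttt{While} loop must be bounded by an amortized charge of each dequeue to a prior enqueue, with the total number of enqueues equal to the number of identifiers in the label, i.e.\ $\sum_{v}\Gamma^{-}(v)=m$ --- and the final step using weak connectivity of a T-DAG (Definition~\ref{def:t-dag}) to get $n-1\le m$ and hence fold the $\mathcal{O}(n)$ terms (array initialization, loop overhead) into $\mathcal{O}(m)$ is the right way to justify the lemma's $\mathcal{O}(m)$ claim rather than the weaker $\mathcal{O}(n+m)$.
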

 \begin{lemma} \label{lemma:exist-diferent-outdegree}
        Let $S$ and $T$ be T-DAGs with $|S| = |T|$ and $\Gamma^+(s) = \Gamma^+(t) = k$. Let $\bar{S}$ and $\bar{T}$ be their isomorphism class representatives.
        $$ \bar{S} \ncong \bar{T} \Rightarrow \exists \, u \in V(\bar{S}), v \in V(\bar{T}) \text{ s.t. } $$
        $$ \text{id}(u) = \text{id}(v) \text{ but } \Gamma^+(u) \neq \Gamma^+(v) $$
        where id is the identifier assigned to each vertex when labeled.
    \end{lemma}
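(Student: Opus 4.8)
The plan is to reduce the claim to the strict ordering $\prec$ and then to locate the promised discrepancy along a simultaneous breadth-first traversal of $\bar{S}$ and $\bar{T}$. First I would note that, by Remark~\ref{remark:t-cong-bart}, $\bar{S} \cong S$ and $\bar{T} \cong T$, so $|\bar{S}| = |\bar{T}| = n$ and both roots still have outdegree $k$. Since $\bar{S} \ncong \bar{T}$, Lemma~\ref{lemma:equiv-iso} yields $\bar{S} \not\equiv \bar{T}$; by Definition~\ref{def:equality} this means at least one of $\bar{S} \prec \bar{T}$, $\bar{T} \prec \bar{S}$ holds, and by antisymmetry (Remark~\ref{remark:is-total-ordering}) exactly one does. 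Without loss of generality I assume $\bar{S} \prec \bar{T}$.

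Next I would unfold Definition~\ref{def:relation}. Because $|\bar{S}| = |\bar{T}|$ and the root outdegrees are both $k$, clauses (i) and (ii) cannot produce the strict inequality, so it must come from clause (iii): ordering the children-subtrees as $\bar{S}_1 \preceq \dots \preceq \bar{S}_k$ and $\bar{T}_1 \preceq \dots \preceq \bar{T}_k$, there is a first index $i$ with $\bar{S}_i \prec \bar{T}_i$. I would then recurse on this pair, since $\bar{S}_i \prec \bar{T}_i$ is again witnessed by one of clauses (i)--(iii); following the chain downward, it must terminate either at a matched pair of subtrees whose roots have different outdegree (clause (ii)) or at a matched pair of different size (clause (i)).

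The final step is to transport this structural difference to the identifiers of Definition~\ref{def:t-dag-labeling}. The key fact I would establish is that, so long as the two canonical representatives agree, their breadth-first labelings advance in lockstep: processing vertices in increasing $\text{id}$ order, the clause emitted for $\text{id}\ j$ has length equal to $\Gamma^+$ of the id-$j$ vertex, and the canonical child ordering (given by $\preceq$ refined by the indegree operator $\Delta$) forces the same vertex to be dequeued and the same children to be enqueued in both graphs up to the first point of divergence. Consequently, the first $\text{id}\ j$ at which the two traversals disagree is precisely an identifier whose vertex has different outdegree in $\bar{S}$ and $\bar{T}$, giving the required $u \in V(\bar{S})$ and $v \in V(\bar{T})$ with $\text{id}(u) = \text{id}(v)$ and $\Gamma^+(u) \neq \Gamma^+(v)$.

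I expect the main obstacle to be exactly this last alignment, which is substantially more delicate than in the tree case because a T-DAG vertex may have several parents and paths may reconverge. The troublesome case is when the chain of strict comparisons bottoms out at a \emph{size} difference (clause (i)) rather than an outdegree difference: there one must rule out that the ``extra'' vertices of the larger subtree are merely already-labeled shared nodes reached through another parent, for otherwise the outdegree sequences could coincide even though the representatives are non-isomorphic. I would attack this by an induction on the breadth-first \emph{height} (in the spirit of Lemma~\ref{lemma:repr-well-defined}), proving that agreement of every outdegree clause up to height $h$ reconstructs identical labelings up to height $h$; a non-isomorphism must then surface as a clause-length discrepancy at the first height where the two representatives diverge, and care must be taken to confirm that the first divergent vertex genuinely receives the same identifier in both graphs despite shared descendants.
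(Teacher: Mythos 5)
Your route is essentially the paper's: reduce $\bar{S}\ncong\bar{T}$ to a strict comparison $\bar{S}\prec\bar{T}$, note that clauses (i) and (ii) of Definition~\ref{def:relation} are blocked at the root, descend through clause (iii) to the first pair of sub-DAGs where the comparison is decided, and then argue that the two breadth-first labelings advance in lockstep so that the first divergence shows up as two equally-identified vertices of different outdegree. The paper does the same thing slightly more concretely (it collects, for each index $i$ with $\bar{S_i}\ncong\bar{T_i}$, the first BFS vertex of $\bar{S_i}$ at which the outdegree condition fails, and takes the one of minimal id). So there is no methodological divergence to report.

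There is, however, a genuine gap, and you have put your finger on exactly where it is and then left it open: the claim that a divergence which bottoms out in a \emph{size} difference (clause (i)) must still surface as an \emph{outdegree} difference at some shared identifier is the whole content of the lemma, and your ``I would attack this by an induction on height'' defers rather than proves it. (The paper's own proof hides the same step in the unjustified sentence ``if (i) fails, necessarily does (ii)''.) For trees the step is immediate, but for T-DAGs with shared children it fails. Concretely: let $S$ have root $r$ with $r\to a$, $r\to b$, $a\to x$, $a\to y$, $b\to x$, $b\to y$, and let $S'$ have $r\to a$, $r\to b$, $a\to b$, $a\to x$, $b\to x$, $b\to y$. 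Both are single-source, weakly connected DAGs on $5$ vertices with root outdegree $2$; they are not isomorphic (the longest directed path has length $2$ in $S$ and $3$ in $S'$). Yet in the canonical representatives every identifier is carried by a vertex of the same outdegree in both graphs (the clause-length sequence is $2,2,2,0,0$ in each); the labelings differ only in the \emph{content} of one clause (``$3,4$'' versus ``$1,3$'', i.e., which already-assigned child ids recur), not in any clause length. So the lockstep traversal need not terminate in an outdegree mismatch, your plan cannot be completed as written, and the statement itself needs either a restriction to the tree-like case or a weakened/retargeted conclusion (same id but different child-id multiset) before Lemma~\ref{lemma:canonical-label-well-defined} can safely rely on it.
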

    
    \begin{proof}
        W.l.o.g. we assume $\bar{S} \prec \bar{T}$. This implies that either $(i)$, $(ii)$, or $(iii)$ from Definition~\ref{def:relation} must hold. Let us define the following set of indices, $D \coloneqq \lbrace i \in \lbrace 1, \dots, k \rbrace : \bar{S_i} \ncong \bar{T_i} \rbrace $. From the hypothesis we can ensure that $D \neq \emptyset$. For each index $i \in D$, we take $u_i$ the first vertex traversing $\bar{S_i}$
        
        breadth-first with a FIFO queue such that condition $(ii)$  fails. Note that, from Definition~\ref{def:equality} and Lemma~\ref{lemma:equiv-iso}, $\bar{S_i} \ncong \bar{T_i}$ implies that at some point of the recursion $(i)$ or $(ii)$ will fail. And if $(i)$ fails, necessarily does $(ii)$. Let $U$ be, $U \coloneqq \lbrace u_i : i \in D \rbrace$, and $u^{\ast}$,
        $$ u^{\ast} \coloneqq \underset{u \in U}{\text{argmin}} \lbrace id(u) \rbrace $$
    Let $v^{\ast}$ be the corresponding vertex in $V(\bar{T})$ that made condition $(ii)$ fail for $u^{\ast}$. For all vertices with a smaller id, the outdegree is the same. Additionally, since $\bar{S}$ and $\bar{T}$ are both ordered non-decreasingly, when traversed breadth-first with a FIFO queue, we can affirm that $u^{\ast}$ and $v^{\ast}$ will be given the same identifier but they have different outdegrees.
\end{proof}
\begin{lemma} \label{lemma:canonical-label-well-defined}
    Given two T-DAGs $T$ and $S$, it holds $ T \cong S \Leftrightarrow c(T) = c(S) $. Thus, the canonical labeling presented in Definition \ref{def:t-dag-canonical-labeling} is well defined.
\end{lemma}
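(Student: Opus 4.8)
The plan is to prove the two implications separately, leaning on the machinery already established. The forward implication $T \cong S \Rightarrow c(T) = c(S)$ is essentially a corollary of Lemma~\ref{lemma:repr-well-defined}. If $T \cong S$, that lemma gives $\bar{T} = \bar{S}$, i.e. the two isomorphism-class representatives are the \emph{same} ordered T-DAG. Since the labeling procedure of Definition~\ref{def:t-dag-labeling} is a deterministic function of a fixed ordered T-DAG (breadth-first traversal with a FIFO queue, with the child ordering already fixed by $\preceq$ and $\Delta$), applying it to identical inputs yields identical strings; hence $c(T) = \mathrm{lbl}(\bar{T}) = \mathrm{lbl}(\bar{S}) = c(S)$.

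For the reverse implication I would argue by contraposition, showing $T \ncong S \Rightarrow c(T) \neq c(S)$. First, by Remark~\ref{remark:t-cong-bart} we have $T \cong \bar{T}$ and $S \cong \bar{S}$, so $T \ncong S$ is equivalent to $\bar{T} \ncong \bar{S}$; it therefore suffices to separate $\mathrm{lbl}(\bar{T})$ from $\mathrm{lbl}(\bar{S})$. I would then split according to the three clauses of Definition~\ref{def:relation}, using the total ordering of Remark~\ref{remark:is-total-ordering} and assuming w.l.o.g. $\bar{S} \prec \bar{T}$. If $|S| \neq |T|$, the two labelings encode a different number of vertices (equivalently, a different maximal identifier), so they differ. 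If $|S| = |T|$ but $\Gamma^+(s) \neq \Gamma^+(t)$, the very first clause $c_0$, which lists the children of the root, has different lengths in the two labelings, so again they differ. The remaining, genuine case is $|S| = |T| = n$ and $\Gamma^+(s) = \Gamma^+(t) = k$.

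This last case is where the real work sits, and it is handled by combining Lemma~\ref{lemma:exist-diferent-outdegree} with the recoverability of the outdegree sequence from the label. Since $\bar{S} \ncong \bar{T}$ with equal size and equal root outdegree, Lemma~\ref{lemma:exist-diferent-outdegree} supplies vertices $u \in V(\bar{S})$ and $v \in V(\bar{T})$ with $\mathrm{id}(u) = \mathrm{id}(v)$ but $\Gamma^+(u) \neq \Gamma^+(v)$; that is, the outdegree sequences indexed by identifier differ at position $\mathrm{id}(u)$. The bridge to the labels is Lemma~\ref{lemma:outdegree-sequence}: Algorithm~\ref{alg:outdeg-from-labeling} computes, deterministically from a label alone, exactly this outdegree sequence. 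Hence if $c(S) = \mathrm{lbl}(\bar{S})$ and $c(T) = \mathrm{lbl}(\bar{T})$ were equal, feeding this common string to Algorithm~\ref{alg:outdeg-from-labeling} would return a single sequence equal to both the outdegree sequence of $\bar{S}$ and that of $\bar{T}$, contradicting that they differ at position $\mathrm{id}(u)$. Therefore $c(S) \neq c(T)$, completing the contrapositive.

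The step I expect to be the main obstacle is precisely this bridging argument: one must be careful that the correspondence between the $i$-th clause of the label and the vertex with identifier $i$ is exactly the one reconstructed by Algorithm~\ref{alg:outdeg-from-labeling}, because in a T-DAG a vertex may have several parents and thus appear in several clauses, so the algorithm's skipping of already-processed identifiers is essential for the clause-to-vertex matching to be well defined. Once Lemma~\ref{lemma:outdegree-sequence} guarantees this matching, the remaining reasoning is purely combinatorial and the two easy cases above are routine.
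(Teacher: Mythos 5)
Your proposal is correct and follows essentially the same route as the paper's proof: the forward direction via Lemma~\ref{lemma:repr-well-defined} and determinism of the labeling, and the reverse direction by contraposition, reducing to $\bar{S} \prec \bar{T}$, dispatching cases (i) and (ii) of Definition~\ref{def:relation} directly, and handling case (iii) by combining Lemma~\ref{lemma:exist-diferent-outdegree} with the label-to-outdegree-sequence recoverability of Lemma~\ref{lemma:outdegree-sequence}. Your closing remark on why the clause-to-vertex correspondence is well defined despite multiple parents is a useful point the paper leaves implicit, but it does not change the argument.
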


\begin{figure*}[]
    \centering
    \includegraphics[width=0.95\textwidth]{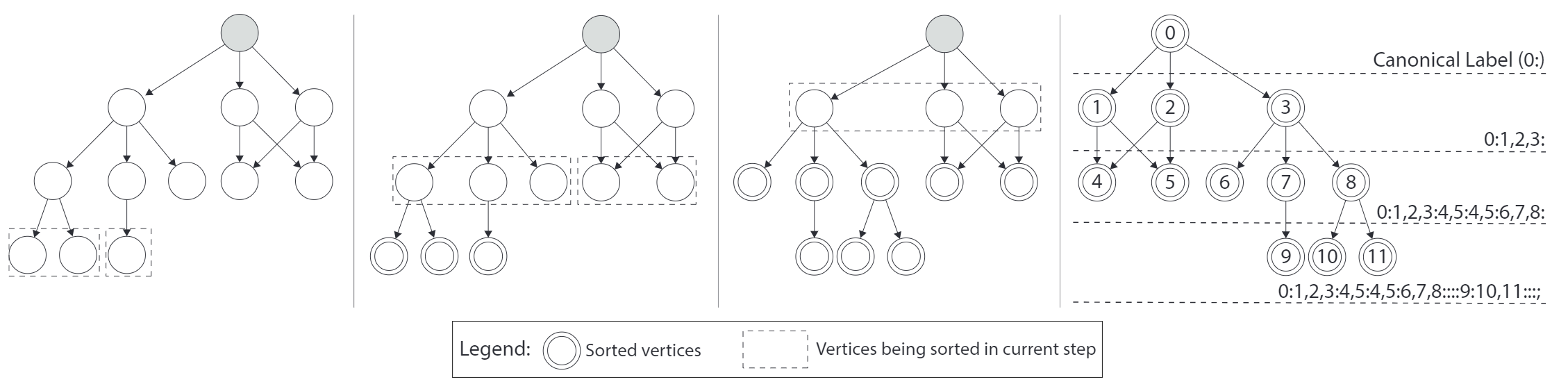}
    \caption{From left to right, obtaining the isomorphism class representative and the canonical labeling.\label{fig:obtain-class-repr}}
\end{figure*}

\begin{proof}
    We prove each implication separately,

    $[\Rightarrow]$  This implication is an immediate consequence of Lemma \ref{lemma:repr-well-defined}. $T \cong S \Rightarrow \bar{T} = \bar{S} \Rightarrow c(T) = c(S)$, since the process described in Definition \ref{def:t-dag-canonical-labeling} is deterministic.
        
    $[\Leftarrow]$ We will prove this implication by contrapositive. We will see that $T \ncong S \Rightarrow c(T) \neq c(S)$. Consider two T-DAGs, $T$ and $S$, such that $T \ncong S$.  
        
    Since $T \ncong S$ then it must be $\bar{T} \ncong \bar{S}$ what implies $\bar{T} \not\equiv \bar{S}$. Therefore, either $\bar{T} \prec \bar{S}$ or $\bar{S} \prec \bar{T}$ must hold. Without loss of generality we assume $\bar{S} \prec \bar{T}$. Lemma~\ref{lemma:outdegree-sequence} proofs that, given a T-DAG labeling, lbl$(T)$, we can obtain $|T|$ and $T$'s outdegree sequence. As a consequence, lbl$(\bar{S}) = $ lbl$(\bar{T}) \Rightarrow |\bar{S}| = |\bar{T}|$, and two vertices with the same id have the same outdegree. Furthermore, from Definition~\ref{def:relation} we recall:
         
    $$ \bar{S} \prec \bar{T} \Rightarrow \left\lbrace \begin{array}{l} (i) |\bar{S}| < |\bar{T}| \text{ , or} \\ (ii) |\bar{S}| = |\bar{T}| \wedge \Gamma^+(\bar{s}) < \Gamma^+(\bar{t}) \text{ , or} \\ (iii) |\bar{S}| = |\bar{T}| \wedge \Gamma^+(\bar{s}) = \Gamma^+(\bar{t}) = k \wedge \\ \hspace{19pt} \left( \bar{S_1}, \dots, \bar{S_k} \right) \prec \left( \bar{T_1}, \dots, \bar{T_k} \right) \end{array} \right. $$
    
    $(i) \Rightarrow |\bar{S}| \neq |\bar{T}| \overset{L.~\ref{lemma:outdegree-sequence}}{\Longrightarrow} \text{ lbl}(\bar{S}) \neq \text{ lbl}(\bar{T}) \Rightarrow \text{ c}(S) \neq \text{ c}(T)$
    
    $(ii) \Rightarrow\Gamma^+(\bar{s}) \neq \Gamma^+(\bar{t}) \Rightarrow $ clause $\# 1$ in each label will have a different size $\Rightarrow \text{ c}(S) \neq \text{ c}(T)$

    If condition $(i)$ or $(ii)$ happen, we are done with the proof. Otherwise we can assume $|\bar{S}| = |\bar{T}|$ and $\Gamma^+(\bar{s}) = \Gamma^+(\bar{t}) = k$. 

From Lemma~\ref{lemma:outdegree-sequence} we state the following:
        
    $$\text{lbl}(\bar{S}) = \text{ lbl}(\bar{T}) \Rightarrow \forall v \in V(\bar{S}), u \in V(\bar{T}); $$ 
    $$ \text{ id}(v) = \text{ id}(u) \Rightarrow \Gamma^+(v) = \Gamma^+(u) $$ \\[-10pt]
    That is, if two vertices have the same identifier (and the T-DAGs they belong to have the same label), they must have the same outdegree.
    
    Equivalently, if this does not hold, then two T-DAGs cannot have the same labeling. Formally, 
    $$ \exists v \in V(\bar{S}), \exists u \in V(\bar{T}) \text{ s.t. } \text{ id}(v) = \text{ id}(u) \text{ but } $$
    $$ \Gamma^+(v) \neq \Gamma^+(u)  \Rightarrow \text{lbl}(\bar{S}) \neq \text{lbl}(\bar{T}) $$ 
    Then, if we assume $(iii)$ holds, 
    
    $(iii) \Rightarrow |\bar{S}| = |\bar{T}| \wedge \Gamma^+(\bar{s}) = \Gamma^+(\bar{t}) = k$ we also have $\bar{S} \ncong \bar{T} \overset{L.~\ref{lemma:exist-diferent-outdegree}, L.~\ref{lemma:outdegree-sequence}}{\Longrightarrow} \text{ lbl}(\bar{T}) \neq \text{ lbl}(\bar{S}) \Leftrightarrow \text{ c}(S) \neq \text{ c}(T)$
    
    And we have proven that $T \ncong S \Rightarrow c(T) \neq c(S)$. Thus,  $c(T) = c(S) \Rightarrow T \cong S$.
\end{proof}
To obtain the canonical labeling from any given T-DAG, we first obtain its isomorphism class representative and then the labeling induced from it. Figure~\ref{fig:obtain-class-repr} presents a bottom up strategy to obtain the class representative and the associated canonical labeling (it should be read from left to right, top to bottom).
In the first four graphs, we perform a bottom up approach to reorder the T-DAG. Bold vertices are those already sorted and the dashed boxes mark which sets of siblings are to be sorted next. Finally, given the class representative ($\bar{T}$), we provide the canonical labeling.

Using the constructions and the efficient labeling derived in this section, we can state our main theoretical result.
\begin{theorem}\label{mainthm}
The set of \unknown{} inputs/outputs induces a family of subgraphs (patterns), within the Bitcoin User Network, which can be efficiently labeled and tested for isomorphism.
\end{theorem}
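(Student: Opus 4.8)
The plan is to assemble the theorem from the machinery developed throughout Section~\ref{sec:gen-wflow}, since the statement is essentially a synthesis of the preceding definitions and lemmas rather than a fresh argument. First I would establish the \emph{family of subgraphs} claim. Starting from Definition~\ref{def-unkn_tx}, I identify the \unknown{} inputs/outputs as the vertices of $G_{\text{\graphName{}}}$ carrying a \texttt{None} address; Lemma~\ref{lemma:is-a-dag} guarantees this graph is a DAG, and the contraction into $G_{\text{\graphName{}}}^*$ preserves acyclicity by Remark~\ref{remark:still-a-dag}. Definition~\ref{def:unknown-graph} then carves out, for each $\alpha$-node source $s$, the maximal single-source weakly connected induced subgraph whose internal nodes are exactly the unknown outputs ($f=0$) and whose sinks terminate ($f=1$). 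Collecting these over all admissible roots yields the set $\mathcal{D}$ of Definition~\ref{def:set-dags}, and by the accompanying remark each element is a T-DAG in the sense of Definition~\ref{def:t-dag}. This exhibits the desired family of patterns inside the User Network.

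Next I would establish the \emph{efficiently labeled and tested for isomorphism} claim. The labeling is produced by Algorithm~\ref{alg:dag-generation} to build $G(s)$, followed by the canonical-labeling pipeline: reorder subtrees with $\prec$ (Definition~\ref{def:relation}), break ties with the indegree operator $\Delta$ to obtain the class representative $\bar{T}$ (Definition~\ref{def:t-dag-canonical}), and emit the breadth-first string $c(T)=\text{lbl}(\bar{T})$ of Definition~\ref{def:t-dag-canonical-labeling}. Correctness of the isomorphism test is exactly Lemma~\ref{lemma:canonical-label-well-defined}: $T\cong S \Leftrightarrow c(T)=c(S)$, so any two patterns are compared by a single string equality on their canonical labels. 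I would emphasize that this reduces pattern comparison to constant-overhead string matching once the labels are computed.

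For the efficiency bound I would aggregate the component complexities. Building each $G(s)$ is linear in its size via the stack-based traversal of Algorithm~\ref{alg:dag-generation}; recovering the outdegree sequence from a label is $\mathcal{O}(m)$ with $m=|E(T)|$ by Lemma~\ref{lemma:outdegree-sequence-complexity}; and the bottom-up canonicalization sorts sibling sets height by height, each comparison resolved by the recursive $\prec$ test, giving a bound polynomial in $|V|$ and $|E|$. Since general graph isomorphism is only known to be quasipolynomial (Section~\ref{subsec:related-graph}), the point I would stress is that restricting to the single-source T-DAG structure lets us reuse the near-linear tree-canonization philosophy, so the whole pipeline remains polynomial.

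The main obstacle I anticipate is pinning down what ``efficiently'' rigorously means and tying the qualitative complexity claims together into one explicit bound, since the statement is deliberately informal. The genuinely load-bearing technical fact is Lemma~\ref{lemma:canonical-label-well-defined} (itself resting on Lemma~\ref{lemma:equiv-iso}, Lemma~\ref{lemma:repr-well-defined}, and Lemma~\ref{lemma:exist-diferent-outdegree}); everything else is assembly. I would therefore frame the proof as a short corollary-style argument that cites these lemmas in order and states the aggregate complexity explicitly, rather than leaving it implicit in the surrounding text.
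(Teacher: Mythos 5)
Your proposal matches the paper's approach: the paper gives no standalone proof of Theorem~\ref{mainthm}, presenting it instead as a direct consequence of the constructions (Definitions~\ref{def:utxio-graph}--\ref{def:set-dags}) and the canonical-labeling results (Lemmas~\ref{lemma:equiv-iso}, \ref{lemma:repr-well-defined}, and~\ref{lemma:canonical-label-well-defined}), which is exactly the corollary-style assembly you describe. Your observation that ``efficiently'' is left informal is also accurate --- the paper likewise never aggregates the component complexities into a single explicit bound.
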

To the best of our knowledge, this is the first result that systematically addresses \unknown{} transactions, which are often neglected in the current technical literature. 
In the next section, we will use our isomorphism algorithm to cluster TX T-DAGs that interact with the same patterns. Note that, while presented only for \unknown{} transactions, our approach immediately extends to any transaction system where patterns can be modeled by T-DAGs.

\section{Experimental Results}\label{sec:implementation}
In this section, we describe the methodology adopted to implement and test our model on the Bitcoin ledger and discuss the achieved results. Our approach consists of multiple steps. First, we parsed the Bitcoin blockchain according to the framework introduced in Section~\ref{subsec:gwflow-1}. Blocks and transactions information were retrieved by querying directly a Bitcoin full-node, importing data into a MySQL database. 
We then retrieved all the \unknown{} transactions with a valid locking script from our database, and we applied our methodology to build TX T-DAGs and study their patterns.
Finally, we clustered all the TX T-DAGs found in the previous step according to their isomorphism classes \ndssColorTex{to group all similar patters that could have been created by the same entity}.
The above steps are described in more details in the following subsections.

\subsection{Database}\label{subsec:database}
Complete and reliable Bitcoin blockchain data are essential to correctly build the TX T-DAGs. The official software release used by the Bitcoin protocol, i.e. Bitcoincore, 
is not suitable for this purpose as it is not designed for the analysis of the data contained in the blockchain. To the best of our knowledge, all freely available blockchain explorer tools suffer from the problems described in Section~\ref{subsec:gwflow-1}. That is, they do not properly handle custom transactions. For this reason, we have designed a MySQL database to store transaction data retrieved by parsing the Bitcoin ledger using our model discussed in Section~\ref{subsec:gwflow-1}. 
We imported all the Bitcoin blockchain data starting from the genesis block 0 (mined on 2009-01-03), up to block 591,872 (mined on 2019-08-26). Our database consists of more than 448 million Bitcoin transactions, over 1.1 billion transaction inputs, over 1.2 billion transaction outputs and around 550 million Bitcoin addresses. We hosted our MySQL database on a Dell Poweredge R740 Server, equipped with 2 CPU Intel® Xeon® Gold 6144 3.5G, RAM 512GB, running Ubuntu Server 18.04.4 LTS. To parse the Bitcoin ledger, we used Bitcoin Core Daemon v.0.18.0.0, running on a Dell XPS laptop equipped with an Intel® Xeon® CPU E3-1505M 2.80GHz, RAM 32GB - OS: Ubuntu 18.04.4 LTS. 

\begin{table}[]
\centering
\begin{tabular}{|c|c|c|c|c|}
\hline
\textbf{\begin{tabular}[c]{@{}c@{}}Number of \\ isomorphic \\ TX T-DAG\end{tabular}} & \textbf{Height} & \textbf{Cardinality} & \textbf{\begin{tabular}[c]{@{}c@{}}Number \\ of edges\end{tabular}} & \textbf{\begin{tabular}[c]{@{}c@{}}Number \\ of roots\end{tabular}} \\ \hline
{\color[HTML]{000000} 29218} & 2 & {\color[HTML]{000000} 3} & 2 & 1 \\ \hline
{\color[HTML]{000000} 607} & 2 & {\color[HTML]{000000} 11} & 14 & 2 \\ \hline
{\color[HTML]{000000} 51} & 2 & {\color[HTML]{000000} 6} & 7 & 1 \\ \hline
{\color[HTML]{000000} 36} & 2 & {\color[HTML]{000000} 6} & 5 & 1 \\ \hline
{\color[HTML]{000000} 32} & 2 & {\color[HTML]{000000} 5} & 4 & 1 \\ \hline
{\color[HTML]{000000} 25} & 2 & {\color[HTML]{000000} 7} & 6 & 1 \\ \hline
{\color[HTML]{000000} 20} & 2 & {\color[HTML]{000000} 7} & 6 & 1 \\ \hline
{\color[HTML]{000000} 14} & 2 & {\color[HTML]{000000} 6} & 6 & 1 \\ \hline
{\color[HTML]{000000} 12} & 2 & {\color[HTML]{000000} 4} & 3 & 1 \\ \hline
{\color[HTML]{000000} 9} & 2 & {\color[HTML]{000000} 6} & 5 & 1 \\ \hline
{\color[HTML]{000000} 9} & 2 & {\color[HTML]{000000} 40002} & 40000 & 20000 \\ \hline
\textbf{...} & \textbf{...} & \textbf{...} & \textbf{...} & \textbf{...} \\ \hline
1 & 2260 & 6878 & 7176 & 10 \\ \hline
1 & 514 & 2073 & 4144 & 7 \\ \hline
1 & 383 & 1568 & 3096 & 13 \\ \hline
1 & 381 & 1059 & 1058 & 5 \\ \hline
1 & 2 & 5 & 4 & 1 \\ \hline
\end{tabular}
\caption{10 most common isomorphism classes and some other more complex patterns.}
\label{tab:iso_classes}
\end{table}

\begin{figure*}[!ht]
    \centering
    \includegraphics[width=0.75\textwidth]{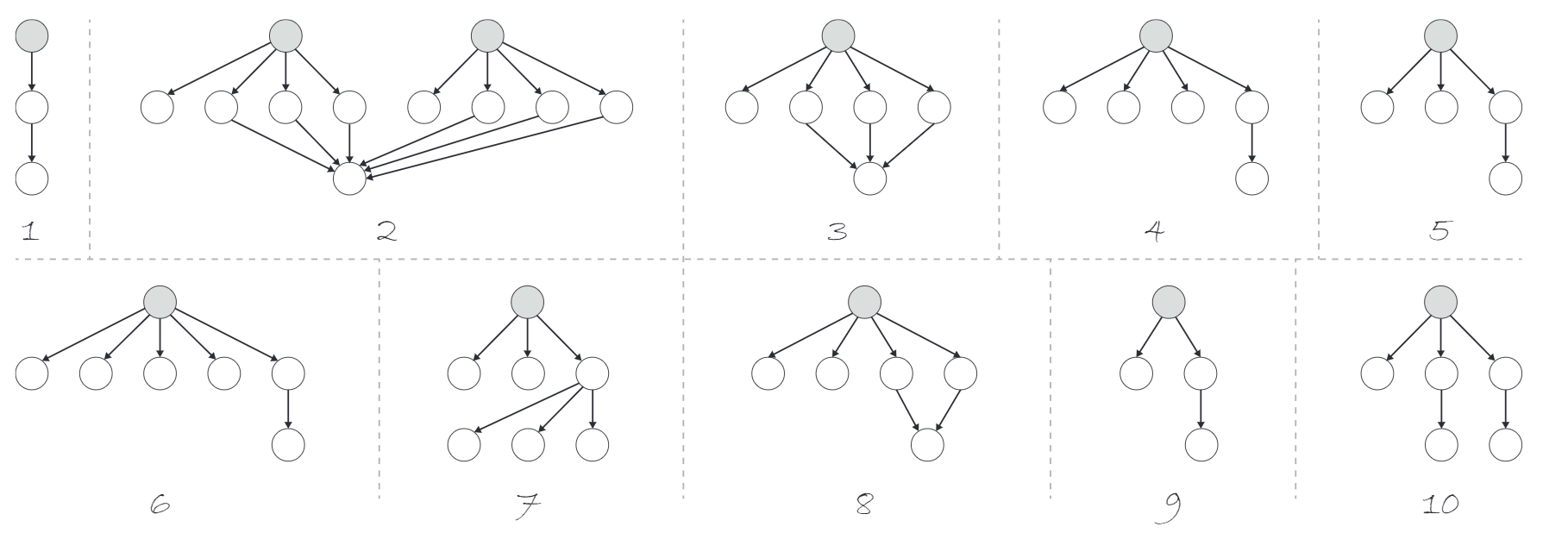}
    \caption{10 most common isomorphism classes.\label{fig:graph-table}}
\end{figure*}

\subsection{Unknown TX T-DAG Construction}\label{subsec:unk_tdag}
\ndssColorTex{After the parsing phase, our database contains all the transactions included in the Bitcoin ledger, both unknown and standard. At this point, we started analyzing the unknown ones, 
identifying around 22 million $\alpha$-nodes that can generate an Unknown TX T-DAG (DAGs where the inner nodes are only unknown transaction outputs).}
We then built a forest by iterating Algorithm~\ref{alg:dag-generation} for each $\alpha$-node.
Each weakly connected component of this forest represents an Unknown TX T-DAG originating from one or more alpha nodes. We used the library networkx 2.3 (together with the Python 3.5 interpreter) to create and manage the forest containing all the Unknown TX T-DAGs.
\subsection{Pruning Phase}
By construction, Unknown TX T-DAGs represent transaction patterns in the blockchain network 
generated by unknown transactions. The root denotes the set of (standard) inputs that generated the pattern. Each inner node represents an unknown transaction output, i.e. an output with a Null value, that has been spent.
Finally, each leaf represents either a known transaction output (with a valid Bitcoin address attached to it) or an unspent transaction output (either with a valid or a null address). Therefore, an unknown TX T-DAG of height 1 is trivial:
in fact, such graph represents a single transaction with an unknown output that has not been spent. An output of this type could be an invalid/unspendable output or simply a custom but valid output that has not been spent in the considered blockchain portion. 
In the first case, since its locking script is malformed, this output is impossible to redeem. Therefore, the corresponding Unknown TX T-DAG can never grow further. In the second case, however, if the unknown output is spent in the future, our methodology will capture this event during the update of our database, and the associated Unknown TX T-DAG structure will be updated and considered for further analysis. Following this observation, we pruned the forest by dropping the weakly connected components of height 1 \arXivColorTex{as they do not represent a relevant pattern for either $\mathcal{T}$ or $\mathcal{U}$}.\\
Finally, we obtained a forest with 803,782 nodes and 797,432 edges, having 30,333 weakly connected components left, i.e., our Unknown TX T-DAGs. These T-DAGs can be easily integrated into the transaction network $\mathcal{T}$ and the user network $\mathcal{U}$. In this way, bot the blockchain analysis tools and the proposals in the literature which use these data structures will rely on complete information, never considered before.\\

\subsection{Isomorphism Detection}\label{subsec:iso_detection}
\ndssColorTex{In the last step of our methodology, we clustered the 30,333 Unknown TX T-DAGs, obtained in the previous phase, according to their isomorphism classes. For each T-DAG, we built its isomorphism class representative (definition~\ref{def:t-dag-canonical}) by using the total ordering introduced in Section~\ref{subsec:gwflow-4}}
Before performing the clustering procedure, we augmented every TX T-DAGs with more than one root, such as the one depicted in Figure~\ref{fig:graph-table}.2. In particular, for each graph of the cited type, we created a new root connected to each of the old ones. 
Consequently, each graph $g$ with multiple roots is converted into a new graph $g^*$ having a single root and the same nodes of $g$ plus one (the new root).
Once all the Unknown TX T-DAGs were standardized to have a single root, we clustered them according to their isomorphism class representative.
We identified 273 different isomorphism classes. Figure~\ref{fig:graph-table} shows the 10 most common classes, while Table~\ref{tab:iso_classes} reports the height, cardinality, number of edges, and number of roots for the same classes and some other ones characterized by a more complex pattern.


\subsection{Discussion}\label{subsec:discussion}
Our  approach is, to the best of our knowledge, the only one capable of identifying transaction inputs and outputs without relying on Bitcoin addresses, providing a framework to correctly handle unknown transactions. 
This feature allowed us to detect unknown transaction patterns not captured by state of the art blockchain analysis techniques. 
The 30,000+ Unknown TX T-DAGs discovered, 
can be easily integrated into the Bitcoin User Network, finally providing a complete transaction history, taking into account also unknown transactions. It is worth highlighting that even the simplest of the identified structures, such as the one in Figure~\ref{fig:graph-table}.1, that appears more than 29,000 times in the Bitcoin ledger, is sufficient to make current parsers not to consider potentially relevant data.
Indeed, although the transaction contains the root (the payer address) and the leaf (the recipient address), the User Network built without considering unknown transactions will not contain the middle node (unknown output), hence breaking the connection between the two users. 
Instead, integrating the User Network with our T-DAGs, automatically increases the accuracy of any Bitcoin clustering heuristic used so far, as well as any deanonymization technique, by simply considering never-used before data.
In addition, our isomorphism classes could lead to new clustering techniques: non-trivial patterns, i.e. transaction schemes not originated from a normal user behavior, can be used to cluster 
services/entities that exhibit the same patterns.
Other than the 10 most common isomorphism classes shown in Figure~\ref{fig:graph-table}, we found several other patterns that deserve particular attention from a semantic point of view.
As an example, we discovered an Unknown TX T-DAG of height 2260, having 6878 nodes, and 7176 edges. This complex pattern has 10 roots, i.e., is generated by 10 different standard transactions and therefore potentially by up to 10 different entities. Started on 2014-02-19, and concluded on 2014-12-08, such a pattern moved a total of 247.36 Bitcoins that, according to the historical Bitcoin prices, were worth about 92,500 US Dollars at the time of the last transaction (December 2014)---a complete analysis of the just introduced graph, and other interesting ones, will be provided in future work.\\
\arXivColorTex{As a further step to refine our results, we have removed T-DAGs generated by trivial transactions, e.g., transactions redeemable by anyone, and T-DAGs generated by non-standard transactions with a known semantic, e.g., transactions generated by software bugs, network tests, and crypto challenges, just to name a few. The goal is to focus only on interesting T-DAGs, i.e., those that may have been created to hide malicious behavior, discarding those generated by known and legitimate activities. In order to perform this filtering, we checked the locking script at the root of each T-DAGs. In particular, we have created regular expressions for each non-standard locking script with trivial semantics observed in the literature. Then, we used these regular expressions to filter our T-DAGs. \\
After this step, we discarded the graphs listed below as generated by non-standard locking scripts observed in~\cite{bistarelli_2019}:
\begin{itemize}
    \item 2 T-DAGs generated by \textit{P2PKH NOP} transactions, that have been probably used to test the OP\_NOP operator;
    \item 2 T-DAGs generated by \textit{OP\_MIN OP\_EQUAL} transactions, that anyone can easily unlock without any private key;
    \item 5 T-DAGs generated by \textit{Pay to Hash} transactions, considered as ``contest'' in the network to find the correct value of the hash in the transactions;
    \item 1 T-DAGs generated by \textit{OP\_IF} transactions, that could be used to make a P2SH that can be unlocked by revealing only the redeem script;
\end{itemize}
Moreover, we discarded also the following graphs generated by trivial locking scripts never observed before:
\begin{itemize}
    \item 2 T-DAGs generated with the \textit{OP\_CHECKMULTISIG} operator, that anyone can easily unlock without any private key;
\end{itemize}
The remaining 30,221 T-DAGs could be generated either by an unknown transaction never observed before or by a non-standard transaction known in the literature but with non-trivial semantics. Such graphs could be linked to malicious behavior and deserve a careful investigation in future work.
}

\section{Conclusion and Future Work} \label{sec:conclusion}
We have shown that the current assumption that each transaction output has an address attached to it, is false. We have identified transactions that violate the cited assumption, labelling them \emph{unknown} transactions. These unknown transactions imply, among other, that current clustering techniques are incomplete. Starting from  the above  observation, we proposed a theoretical model rooted on sound graph theory to detect, study, and classify patterns of unknown transactions. 
Exploring the Bitcoin network via our new tool we unveiled non-trivial classes of transaction patterns never considered before. 
\ndssColorTex{We were able to identify over 30,000 unknown TX T-DAGs. Each of them represents a money flow that is invisible to standard parsing techniques. Some of these patterns show a high level of sophistication, with a complex topology potentially associated with automated payment services.}
Our novel approach to the Bitcoin graph opens up a brand new vein of research.
For instance, the semantic associated to the discovered patterns is still to be explored. 
Furthermore, by extending the theoretical model to every transaction (using for example flow theory to remove cycles in the user network or, in general, focusing on transaction patterns that are represented by T-DAGs), it could be possible to study and cluster other types of non-standard behaviors within the \texttt{Bitcoin} environment. Moreover, our solution could be adapted to detect anomalous behaviour in the flourishing field of blockchain-based applications.
In conclusion, we believe that the contribution provided in this work, from both a theoretical and practical point of view, other than being interesting on their own---providing for the first time a complete view of the bitcoin blockchain---, 
also pave the way for further research and applications in the Bitcoin domain and its spin-off technologies.


\section*{Acknowledgment}
This publication was partially supported by the award NPRP
11S-0109-180242 from the Qatar National Research Fund
(QNRF), a member of The Qatar Foundation. The information
and views set out in this publication are those of the authors
and do not necessarily reflect the official opinion of the QNRF.

\balance
\bibliography{references}{}
\bibliographystyle{IEEEtranS.bst}



%



\end{document}